\documentclass[12pt]{article}
\usepackage{amssymb,amsmath,amsthm,amscd,latexsym}
\usepackage{mathrsfs}
\usepackage{mathrsfs}
\usepackage{amsfonts}
\usepackage{amsmath}
\usepackage{amssymb}
\usepackage{arydshln}
\usepackage{amscd}
\usepackage{cite}
\usepackage[colorlinks,linkcolor=red,anchorcolor=blue,citecolor=green]{hyperref}
\allowdisplaybreaks[3]

\renewcommand{\paragraph}{\roman{paragraph}}
 \setlength{\topmargin}{0in}
\setlength{\textheight}{8.5in} \setlength{\textwidth}{6.0in}
\oddsidemargin=0.3in \evensidemargin=-0.3in
\input cyracc.def

\parskip 3pt

\def\F{\mathbb{F}}

\def\F{\mathbb{F}}

\newtheorem{theorem}{Theorem}
\newtheorem{corollary}{Corollary}
\newtheorem{example}{Example}

\newtheorem{lemma}{Lemma}

\theoremstyle{definition}
\newtheorem{definition}{Definition}
\newtheorem{remark}{Remark}

\begin{document}
\title{Two classes of LCD codes derived from $(\mathcal{L},\mathcal{P})$-TGRS codes}

\author{Ziwei Zhao$^{1}$, Xiaoni Du$^{*1,2,3}$, Xingbin Qiao$^{1}$\\
1. College of Mathematics and Statistic, \\Northwest Normal University, Lanzhou, 730070, China.\\
2. Key Laboratory of Cryptography and Data Analytics, \\Northwest Normal University,\\ Lanzhou, 730070, China.\\
3. Gansu Provincial Research Center for Basic Disciplines\\ of Mathematics and Statistics,\\ Northwest Normal University, Lanzhou, 730070, China.
}
\date{}
\maketitle
{\bf Abstract~~} Twisted generalized Reed-Solomon (TGRS) codes, as a flexible extension of classical generalized Reed-Solomon (GRS) codes, have attracted significant attention in recent years. In this paper, we construct two classes of LCD codes from the $(\mathcal{L},\mathcal{P})$-TGRS code $\mathcal{C}_h$ of length $n$ and dimension $k$, where $\mathcal{L}=\{0,1,\ldots,l\}$ for $l\leq n-k-1$ and $\mathcal{P}=\{h\}$ for $1\leq h\leq k-1$. First, we derive the parity check matrix of $\mathcal{C}_h$ and provide a necessary and sufficient condition for $\mathcal{C}_h$ to be an AMDS code. Then, we construct two classes of LCD codes from $\mathcal{C}_h$ by suitably choosing the evaluation points together with certain restrictions on the coefficient of $x^{h-1}$ in the polynomial associated with the twisting term. From the constructed LCD codes we further obtain two classes of LCD MDS codes. Finally, several examples are presented.

{\bf Key words~~}parity check matrix, LCD code, AMDS code, LCD MDS code, twisted generalized Reed-Solomon code

\section{Introduction}
Let $\mathbb{F}_{q}$ be the finite field of $q$ elements, where $q=p^{m}$ with prime $p$ and positive integer $m$. An $[n,k,d]$ linear code $\mathcal{C}$ over $\mathbb{F}_{q}$ is a $k$-dimensional linear subspace of $\mathbb{F}_{q}^{n}$ with minimum (Hamming) distance $d$. The Euclidean dual code of the linear code $\mathcal{C}$ is denoted by $$\mathcal{C}^{\perp}=\left\{\boldsymbol{y}\in\mathbb{F}_q^n:
\langle\boldsymbol{x},\boldsymbol{y}\rangle=0,\text{ for all }\boldsymbol{x}\in\mathcal{C}\right\},$$
where $\langle\boldsymbol{x},\boldsymbol{y}\rangle=\sum_{i=1}^{n}x_{i}y_{i}$ for vectors
$\boldsymbol{x}=(x_{1},x_{2},\ldots,x_{n})$ and $\boldsymbol{y}=(y_{1},y_{1},\ldots,y_{n})$ in $\mathbb{F}_{q}^{n}$. A generator matrix of an $[n,k]$ code $\mathcal{C}$ over $\mathbb{F}_{q}$ is any $k\times n$ matrix $G$ whose rows form a basis of $\mathcal{C}$. In particular, there is an $(n-k)\times n$ matrix $H$, called a parity check matrix for the $[n,k]$ code $\mathcal{C}$, defined by $\mathcal{C}=\{\mathbf{x}\in\mathbb{F}_{q}^{n}:H\boldsymbol{x}^{T}=0\}$.

For an \([n, k, d]\) linear code \(\mathcal{C}\), the parameters \(n\), \(k\) and \(d\) must satisfy the Singleton bound [\ref{t42}]: $d \leq n - k + 1$. The code \(\mathcal{C}\) is called a maximum distance separable (MDS) code if \(d = n - k + 1\), and almost MDS (AMDS) code if $d=n-k$.
MDS and AMDS codes have numerous applications in communication and storage systems, such as data recovery, cybersecurity, secret sharing and design constructions [\ref{t1}-\ref{t7}]. Another important class of codes is linear complementary dual (LCD) codes, the code with $\mathcal{C}\cap\mathcal{C}^{\perp}=\{\boldsymbol0\}$, which was introduced by Massey in 1992 [\ref{t8}]. Bringer et al. [\ref{t9}] and Carlet et al. [\ref{t10}] found that binary LCD codes play a decisive role in implementations against side-channel and fault injection attacks. Due to their strong algebraic structures and interesting practical applications, numerous LCD, MDS and LCD MDS codes have been constructed using various methods [\ref{t18},\ref{t14}]. In particular, most known constructions are based on generalized Reed Solomon codes (GRS) codes and their extensions since GRS codes are all MDS [\ref{t19}].

As an extension of GRS codes, twisted GRS codes are firstly introduced by Beelen et al. [\ref{t16}] in 2017 inspired by the construction for twisted Gabidulin codes. Unlike GRS codes, TGRS codes may not be MDS codes. By choosing appropriate twists, many researchers have investigated TGRS codes and constructed various classes of codes, including MDS codes [\ref{t18}-\ref{t20}] that are not equivalent to GRS codes, NMDS codes [\ref{t21}], self dual codes [\ref{t22}], LCD codes [\ref{t13}-\ref{t14}] and LCD MDS codes [\ref{t24}]. Moreover, it was also shown that TRS codes possess favorable structural properties, making them suitable as alternatives to Goppa codes in the McEliece code-based cryptosystem [\ref{t25}].

For $(\mathcal{L},\mathcal{P})$-TGRS codes (see Definition \ref{TGRS} for more details), in 2017, Beelen et al. [\ref{t16}] provided a sufficient and necessary condition for the 1-TGRS codes ($\mathcal{L}=\{1\},\mathcal{P}=\{0\}$ or $\{k-1\}$) to be MDS. Further related results on 1-TGRS codes can be found in [\ref{t21},\ref{t24},\ref{t27}]. Subsequently, 2-TGRS codes [\ref{t28}-\ref{t30}], 3-TGRS codes [\ref{t32}], $l$-TGRS ($|\mathcal{L}|=|\mathcal{P}|=l$) codes [\ref{t33}-\ref{t35}] also attracted considerable attention. 
More recently, Zhao et al. [\ref{t31}] established a necessary and
sufficient condition for general $(\mathcal{L},\mathcal{P})$-TGRS codes to be MDS, and later Hu et al. [\ref{t37}] proposed a more precise definition of $(\mathcal{L},\mathcal{P})$-TGRS codes, and a necessary and sufficient condition for them to be MDS codes. On the other hand, several constructions of LCD and LCD MDS codes based on TGRS codes have been reported. In particular, three classes of LCD MDS codes [\ref{t18}] and a class of LCD codes [\ref{t20}] were constructed based on 1-TGRS codes ($\mathcal{L}=\mathcal{P}=\{0\}$ ). Most recently,
several classes of LCD MDS codes [\ref{t30}] and four classes of LCD codes [\ref{t41}] were constructed, respectively, based on $(\mathcal{L},\mathcal{P})$-TGRS codes with $\mathcal{L}=\{0,1\}$ or $\{0,2\}$, $\mathcal{P}=\{0\}$, and $(\mathcal{L},\mathcal{P})$-TGRS codes with $\mathcal{L}=\{0,1,\ldots,l\}$, $\mathcal{P}=\{0\}$.


Motivated by the above works, in this paper we focus on  two classes of LCD codes from the $(\mathcal{L},\mathcal{P})$-TGRS code $\mathcal{C}_h$ with $\mathcal{L}=\{0,1,\ldots,l\}$ for $l\leq n-k-1$ and $\mathcal{P}=\{h\}$ for $1\leq h\leq k-2$. First, we derive an explicit parity check matrix of $\mathcal{C}_h$ and investigate its AMDS property. Then, we construct two classes of LCD codes from $\mathcal{C}_h$ by suitably choosing the evaluation points together with certain restrictions on the coefficient of $x^{h-1}$ in the polynomial associated with the twisting term. In particular, our constructions include two classes of LCD MDS codes. Finally, several examples are provided to illustrate the validity of our results.

The organization of this paper is as follows. In Section 2, we introduce the necessary notation and definitions for TGRS, MDS, AMDS and LCD codes, together with several lemmas on complete symmetric polynomials and determinants. In Section 3, we present the parity check matrix of the $(\mathcal{L},\mathcal{P})$-TGRS code $\mathcal{C}_h$. In Section 4, we give a necessary and sufficient condition for $\mathcal{C}_h$ to be an AMDS code. In Section 5, we construct two classes of LCD codes from $\mathcal{C}_h$, which contain two classes of LCD MDS codes. Finally, Section 6 concludes the paper.

\section{Preliminaries}
In this section, we introduce some notation, definitions, and lemmas that will be used in subsequent sections. For convenience, we begin this section by fixing the following notation.
\begin{itemize}
  \item $q$ is a prime power, and $n,k$ are non-negative integers with $k\leq n$.
  \item $\mathbb{F}_q$ denotes the finite field of order $q$, and $\mathbb{F}_q^{\ast}=\mathbb{F}_q\setminus\{0\}$.
  \item $\mathbb{F}_q^{n}$ denotes the $n$-dimensional vector space over $\mathbb{F}_q$, and $\mathbb{F}_q^{m\times n}$ denotes the set of $m\times n$ matrices over $\mathbb{F}_q$.
  \item $|S|$ denotes the cardinality of a set $S$.
  \item $\boldsymbol{O}$ denotes the zero matrix, and $I_k$ denotes the $k\times k$ identity matrix.
  \item $A^{T}$ denotes the transpose of a matrix $A$.
  \item $\mathrm{Coeff}_{m}(a(x))$ denotes the coefficient of $x^{m}$ in the polynomial $a(x)=\sum_{i=0}^{n} a_i x^i$ with $n\geq m$, that is, $\mathrm{Coeff}_{m}(a(x))=a_m$.

\end{itemize}
\subsection{GRS and TGRS codes}
Let $\mathbb{F}_q[x]$ be the polynomial ring over $\mathbb{F}_q$. Let  $\boldsymbol{\alpha}=(\alpha_1,\alpha_2,\ldots,\alpha_n)\in\mathbb{F}_q^{n}$ and $\boldsymbol{v}=(v_1,v_2,\ldots,v_n)\in(\mathbb{F}_q^{\ast})^{n}$, where $\alpha_1,\alpha_2,\ldots,\alpha_n$ are pairwise distinct elements. The generalized Reed-Solomon (GRS) code $\mathcal{C}(\boldsymbol{\alpha},\boldsymbol{v})$ with length $n$ and dimension $k$ is defined by
$$\mathcal{C}(\boldsymbol{\alpha},\boldsymbol{v})
=\{\bigl(v_1f(\alpha_1),\ldots,v_nf(\alpha_n)\bigr)\,|\,f(x)\in\mathbb{F}_q[x], \deg(f(x))\leq k-1\}.$$
When $\boldsymbol{v}=(1,1,\ldots,1)$, the code is referred to as the Reed-Solomon (RS) code.

TGRS codes are an extension of GRS codes by adding certain monomials (called twists) to specific positions (called hooks) in the generating polynomial $f(x)\in \mathbb{F}_q[x]$ of GRS codes. Hu et al. [\ref{t37}] provided a more precise definition of TGRS codes, which is stated as follows.

\begin{definition}$[\ref{t37}]$\label{TGRS}
Let $n,k$ and $l$ be integers such that $0\le k\le n$ and $0\le l\le n-k$. Let $\mathcal{L}\subseteq\{0,1,\ldots,n-k-1\}$ (called the twist set) with $|\mathcal{L}|=l+1$,
$\mathcal{P}\subseteq\{0,1,\ldots,k-1\}$ (called the position set), and let
$B=(b_{i,j})\in\mathbb{F}_q^{k\times(n-k)}$ satisfy $b_{i,j}=0$ whenever $i\notin \mathcal{P}$ or $j\notin \mathcal{L}$.
Define the twisted polynomial space
\begin{align}\label{F}
\mathcal{F}(\mathcal{L},\mathcal{P},B)
=
\Big\{
f(x)=\sum_{i=0}^{k-1} f_i x^i
+
\sum_{i\in \mathcal{P}} f_i\sum_{j\in \mathcal{L}} b_{i,j}\,x^{k+j}
|\ f_0,\ldots,f_{k-1}\in\mathbb{F}_q
\Big\}.
\end{align}
Then for any pairwise distinct evaluation points $\boldsymbol{\alpha}=(\alpha_1,\alpha_2,\ldots,\alpha_n)\in\mathbb{F}_q^n$ and $\boldsymbol{v}=(v_1,v_2,\ldots,v_n)\in(\mathbb{F}_q^\ast)^n$, the $(\mathcal{L},\mathcal{P})$-twisted generalized Reed-Solomon (TGRS) code is the evaluation code
$$
\mathcal{C}(\mathcal{L},\mathcal{P},B)
=
\Big\{
\big(v_1 f(\alpha_1),\ldots, v_n f(\alpha_n)\big)
|\ f\in\mathcal{F}(\mathcal{L},\mathcal{P},B)
\Big\}
\subseteq \mathbb{F}_q^n.
$$
Specifically, when $\boldsymbol{v}=(1,1,\ldots,1)$, the code is called $(\mathcal{L},\mathcal{P})$-TRS code, and when $B=\boldsymbol{O}$, the code  $\mathcal{C}(\mathcal{L},\mathcal{P},B)$ reduces to the usual GRS code $\mathcal{C}(\boldsymbol{\alpha},\boldsymbol{v})$.
\end{definition}

By definition, the properties of an $(\mathcal{L},\mathcal{P})$-TGRS code depend not only on $\boldsymbol{\alpha}$ and $\boldsymbol{v}$, but also on the choices of $\mathcal{L},\mathcal{P}$. In  [\ref{t41},\ref{t47}], Liang et al. considered $\mathcal{L}=\{0,1,\ldots,l\}$ for $l\leq n-k-1$, $\mathcal{P}=\{h\}$ for $h=0$ or $h=k-1$. It is worth noting that [\ref{t41}] investigated four classes of LCD codes for the case $h=0$, while [\ref{t47}] did not discuss the LCD property for the case $h=k-1$. Therefore, it is natural to examine what will happen in the case $1\leq h\leq k-1$.

Below in this paper, we always assume that $\mathcal{L}=\{0,1,\ldots,l\}$ for $l\leq n-k-1$,  $\mathcal{P}=\{h\}$ for $1\leq h\leq k-1$, and $\boldsymbol{\eta}=(\eta_0,\eta_1,\ldots,\eta_l)\in \mathbb{F}_{q}^{l+1}$. Then Eq.\eqref{F} can be written as
$$
\mathcal{F}(\mathcal{L},\{h\},\boldsymbol{\eta})
=
\Big\{
f(x)=\sum_{i=0}^{k-1} f_i x^i
+f_h\sum_{t\in \mathcal{L}} \eta_{t}\,x^{k+t}
\,|\ f_i\in\mathbb{F}_q
\Big\},
$$
and the corresponding $(\mathcal{L},\mathcal{P})$-TGRS code $\mathcal{C}(\mathcal{L},\{h\},\boldsymbol{\eta})$ are defined by
\begin{equation}\label{Ch}
\mathcal{C}(\mathcal{L},\{h\},\boldsymbol{\eta})
=
\Big\{
\big(v_1 f(\alpha_1),\ldots, v_n f(\alpha_n)\big)
\,|\ f\in\mathcal{F}(\mathcal{L},\{h\},\boldsymbol{\eta})
\Big\},
\end{equation}
denoted by $\mathcal{C}_{h}$ for short. Obviously, $\mathcal{C}_{h}$ has a generator matrix
\begin{equation}\label{G}
G_h=\left(
    \begin{array}{ccc}
      v_1 & \cdots & v_n \\
      \vdots & \ddots & \vdots \\
      v_1\alpha_1^{h-1} & \cdots & v_n\alpha_n^{h-1} \\
      v_1(\alpha_1^{h}+\sum_{t\in \mathcal{L}} \eta_{t}\,\alpha_1^{k+t}) &\cdots & v_n(\alpha_n^{h}+\sum_{t\in \mathcal{L}} \eta_{t}\,\alpha_n^{k+t}) \\
      v_1\alpha_1^{h+1} & \cdots & v_n\alpha_n^{h+1} \\
      \vdots & \ddots & \vdots \\
      v_1\alpha_1^{k-1} & \cdots & v_n\alpha_n^{k-1} \\
    \end{array}
  \right).
\end{equation}

\subsection{Fundamentals of linear code on MDS, AMDS and LCD}

In this subsection, we present several lemmas that give equivalent characterizations of MDS and AMDS codes.
\begin{lemma}\label{MDS}$[\ref{t42}]$
Let $\mathcal{C}$ be an $[n,k]$ linear code over $\mathbb{F}_q$, and let $G$ be a generator matrix of $\mathcal{C}$. Then $\mathcal{C}$ is an MDS code if and only if the determinant of any $k\times k$ submatrix of $G$ is nonzero.
\end{lemma}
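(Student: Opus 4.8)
The plan is to translate the minimum-distance condition directly into a statement about linear independence of the columns of $G$, and then invoke the Singleton bound. Write $g_1,\ldots,g_n\in\mathbb{F}_q^k$ for the columns of $G$, so that an arbitrary codeword has the form $mG$ for some $m\in\mathbb{F}_q^k$, with $i$-th coordinate $m^T g_i$. The first observation I would record is that the $i$-th coordinate of $mG$ vanishes precisely when $m^T g_i=0$; hence the number of zero coordinates of $mG$ equals the number of columns $g_i$ lying in the hyperplane $\{x\in\mathbb{F}_q^k : m^T x=0\}$, and the Hamming weight of $mG$ is $n$ minus that count.

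Next I would characterize when a low-weight nonzero codeword exists. A nonzero codeword has weight at most $n-k$ exactly when it has at least $k$ zero coordinates, i.e.\ when there is a nonzero $m$ orthogonal to some $k$ of the columns, say $g_{i_1},\ldots,g_{i_k}$. Such a nonzero $m$ exists if and only if those $k$ columns span a subspace of $\mathbb{F}_q^k$ of dimension strictly less than $k$ --- equivalently, if and only if the $k\times k$ submatrix of $G$ formed by columns $i_1,\ldots,i_k$ is singular. This step rests on the elementary fact that the set of vectors orthogonal to a fixed family of columns is the orthogonal complement of their span, whose dimension equals $k$ minus the rank of that submatrix, so a nonzero orthogonal vector exists exactly when the rank drops below $k$.

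Combining these, $d(\mathcal{C})\le n-k$ holds if and only if some $k\times k$ submatrix of $G$ has zero determinant. Taking the contrapositive, $d(\mathcal{C})\ge n-k+1$ if and only if every $k\times k$ submatrix of $G$ is nonsingular. Finally I would invoke the Singleton bound, which guarantees $d(\mathcal{C})\le n-k+1$ unconditionally; therefore the MDS condition $d(\mathcal{C})=n-k+1$ is equivalent to $d(\mathcal{C})\ge n-k+1$, and the desired characterization follows.

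I expect no serious obstacle here, since the argument is entirely linear-algebraic; the only point demanding care is keeping the direction of each equivalence straight, in particular the bookkeeping that weight $\le n-k$ corresponds to $\ge k$ zero coordinates and hence to a nonzero vector in the orthogonal complement of $k$ chosen columns. A cosmetic alternative would be to phrase everything through a parity check matrix $H$ and the classical fact that $d$ equals the least number of linearly dependent columns of $H$, but working directly with $G$ avoids first passing to the dual code and keeps the proof self-contained.
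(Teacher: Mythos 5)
Your argument is correct. Note that the paper itself gives no proof of this lemma; it is quoted directly from Huffman--Pless [\ref{t42}] as a standard fact, so there is no in-paper argument to compare against. Your column-counting proof (weight of $mG$ equals $n$ minus the number of columns annihilated by $m$, hence $d\le n-k$ iff some $k$ columns are linearly dependent, then finish with the Singleton bound) is exactly the textbook route and is complete; the only point left implicit is that since $G$ is a generator matrix its rows are independent, so $m\neq 0$ implies $mG\neq 0$, which is what lets you identify nonzero codewords with nonzero messages $m$ in the second step.
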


It is well known that the dual of an MDS code is also an MDS code.

\begin{lemma}\label{AMDS}$[\ref{t43}]$
An $[n,k]$ linear code $\mathcal{C}$ over $\mathbb{F}_q$ is AMDS if and only if
a generator matrix $G$ of $\mathcal{C}$ satisfies the following conditions:
\begin{enumerate}
  \item  [$(1)$] There exist $k$ linearly dependent columns in $G$;
  \item  [$(2)$] Any $k+1$ columns of $G$ have rank $k$.
\end{enumerate}
\end{lemma}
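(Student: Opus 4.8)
The plan is to recast the defining property $d=n-k$ of an AMDS code entirely in terms of the ranks of column subsets of the generator matrix $G$, in the same spirit as the MDS characterization of Lemma \ref{MDS}. The key tool is the standard duality between the support of a codeword and the linear (in)dependence of columns of $G$. Writing $G=(g_1,\ldots,g_n)$ with columns $g_j\in\mathbb{F}_q^k$, every codeword is $\boldsymbol{c}=\boldsymbol{x}G$ for some $\boldsymbol{x}\in\mathbb{F}_q^k$, and its $j$-th coordinate vanishes exactly when $\boldsymbol{x}$ is orthogonal to $g_j$. Since $G$ has full row rank $k$, the map $\boldsymbol{x}\mapsto\boldsymbol{x}G$ is injective, so nonzero codewords correspond to nonzero $\boldsymbol{x}$.

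First I would translate the minimum distance into a maximization over vanishing sets. For $S\subseteq\{1,\ldots,n\}$, some nonzero codeword vanishes on every coordinate indexed by $S$ if and only if there is a nonzero $\boldsymbol{x}$ orthogonal to all $g_j$ with $j\in S$, which holds if and only if $\{g_j:j\in S\}$ does not span $\mathbb{F}_q^k$, i.e. $\mathrm{rank}\,\{g_j:j\in S\}<k$. Hence the largest number of zero coordinates carried by a nonzero codeword equals
$$\mu:=\max\bigl\{\,|S|:\mathrm{rank}\,\{g_j:j\in S\}<k\,\bigr\},$$
and since the weight of a codeword is $n$ minus its number of zero coordinates, this yields $d=n-\mu$. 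This identity is the technical heart of the proof and makes both directions of the equivalence symmetric.

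From $d=n-\mu$ one sees that $\mathcal{C}$ is AMDS, i.e. $d=n-k$, if and only if $\mu=k$, and it remains to unfold $\mu=k$ into the two stated conditions. The inequality $\mu\ge k$ is equivalent to the existence of $k$ columns of rank $<k$, that is, $k$ linearly dependent columns, which is condition $(1)$. The inequality $\mu\le k$ says that no subset of size exceeding $k$ has rank $<k$; by monotonicity of rank under enlarging a subset it suffices to forbid this for subsets of size exactly $k+1$, i.e. every $k+1$ columns have rank $k$, which is condition $(2)$. Combining the two inequalities gives $\mu=k\iff(1)\text{ and }(2)$. The main obstacle is precisely the support-rank duality of the second paragraph together with the bookkeeping in the maximum, in particular the use of monotonicity of rank to reduce condition $(2)$ from all large subsets to those of size $k+1$; everything else is routine, and one tacitly uses $n\ge k+1$, which is forced by $d=n-k\ge1$.
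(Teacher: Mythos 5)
Your proof is correct. Note that the paper does not prove this lemma at all: it is quoted from Dodunekov and Landjev [\ref{t43}] and used as a black box, so there is no in-paper argument to compare against. Your derivation is a sound, self-contained justification: the support--rank duality giving $d=n-\mu$ with $\mu=\max\{|S|:\mathrm{rank}\{g_j:j\in S\}<k\}$ is the standard generator-matrix analogue of the parity-check characterization of minimum distance, the injectivity of $\boldsymbol{x}\mapsto\boldsymbol{x}G$ is correctly invoked to match nonzero messages with nonzero codewords, and the reduction of ``no subset of size $>k$ has rank $<k$'' to the size-$(k+1)$ case via monotonicity of rank under enlargement is exactly the right bookkeeping. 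Your closing remark that $n\ge k+1$ is forced is also consistent: when $n=k$ the full-rank matrix $G$ has $k$ independent columns, so condition $(1)$ fails, matching the impossibility of $d=n-k=0$.
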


In what follows, we show a result related to the hull of linear codes and LCD codes.

Let $\mathcal{C}\subseteq \mathbb{F}_q^{n}$ be a linear code. The Euclidean hull of $\mathcal{C}$ is defined as
\[
Hull(\mathcal{C})=\mathcal{C}\cap \mathcal{C}^{\perp}.
\]
If $Hull(\mathcal{C})=\{\boldsymbol0\}$, then $\mathcal{C}$ is called an Euclidean linear complementary dual (LCD) code. The result below follows directly from the definition.

\begin{lemma}\label{LCD}$[\ref{t30}]$
Let $\mathcal{C}$ be an $[n,k]$ linear code over $\mathbb{F}_q$. Suppose that $G_k$ and $H_{n-k}$ are the generator matrix and parity check matrix of $\mathcal{C}$, respectively. Then
$\dim(\operatorname{Hull}(\mathcal{C}))=n-\operatorname{rank}\!\begin{pmatrix} G_k\\ H_{n-k}\end{pmatrix}$. Moreover, $\mathcal{C}$ is an LCD code if and only if
$\operatorname{rank}\!\begin{pmatrix} G_k\\ H_{n-k}\end{pmatrix}=n$.
\end{lemma}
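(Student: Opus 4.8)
The plan is to identify the row space of the stacked matrix $\begin{pmatrix} G_k \\ H_{n-k} \end{pmatrix}$ with the subspace sum $\mathcal{C} + \mathcal{C}^{\perp}$, and then to invoke Grassmann's dimension formula. First I would recall that the parity check matrix $H_{n-k}$ of $\mathcal{C}$ is precisely a generator matrix of the dual code $\mathcal{C}^{\perp}$, so its rows span $\mathcal{C}^{\perp}$. Since the rows of $G_k$ span $\mathcal{C}$, the rows of the stacked matrix together span the subspace sum, giving
$$\operatorname{rank}\!\begin{pmatrix} G_k \\ H_{n-k} \end{pmatrix} = \dim\!\left(\mathcal{C} + \mathcal{C}^{\perp}\right).$$

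Next I would apply Grassmann's formula $\dim(\mathcal{C} + \mathcal{C}^{\perp}) = \dim \mathcal{C} + \dim \mathcal{C}^{\perp} - \dim(\mathcal{C} \cap \mathcal{C}^{\perp})$. Substituting $\dim \mathcal{C} = k$, $\dim \mathcal{C}^{\perp} = n-k$, and $\mathcal{C} \cap \mathcal{C}^{\perp} = \operatorname{Hull}(\mathcal{C})$ yields
$$\operatorname{rank}\!\begin{pmatrix} G_k \\ H_{n-k} \end{pmatrix} = n - \dim\!\left(\operatorname{Hull}(\mathcal{C})\right),$$
which rearranges at once to the claimed identity $\dim(\operatorname{Hull}(\mathcal{C})) = n - \operatorname{rank}\!\begin{pmatrix} G_k \\ H_{n-k} \end{pmatrix}$.

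Finally, the LCD characterization is immediate from the definition: $\mathcal{C}$ is LCD exactly when $\operatorname{Hull}(\mathcal{C}) = \{\boldsymbol 0\}$, i.e.\ when $\dim(\operatorname{Hull}(\mathcal{C})) = 0$, which by the identity just established is equivalent to the full-rank condition $\operatorname{rank}\!\begin{pmatrix} G_k \\ H_{n-k} \end{pmatrix} = n$. There is no substantial obstacle here; the only point requiring a moment's care is the observation that a parity check matrix of $\mathcal{C}$ serves as a generator matrix of $\mathcal{C}^{\perp}$, so that the row space of the stacked matrix is genuinely the sum $\mathcal{C} + \mathcal{C}^{\perp}$ and not some larger space. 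Everything else is a direct application of the rank–dimension correspondence for row spaces together with Grassmann's formula.
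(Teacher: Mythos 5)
Your proof is correct, and it is the standard argument: identify the row space of the stacked matrix with $\mathcal{C}+\mathcal{C}^{\perp}$ and apply Grassmann's formula. The paper itself gives no proof of this lemma (it cites it from the literature and remarks that it follows directly from the definition), so your write-up simply supplies the routine verification the authors omitted.
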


\subsection{Some results on complete symmetric polynomials and determinants}

In this subsection, we recall the definition of the complete symmetric polynomial and some useful results on determinants.

\begin{definition}\label{matrix}$[\ref{t44},\ref{t45}]$
Let $S_t(x_1,x_2,\ldots,x_n)$ be the complete symmetric polynomial of degree $t$ defined by
\[
S_t(x_1,x_2,\cdots,x_n)=
\begin{cases}
0, & \text{\emph{if} } t<0;\\[1mm]
\displaystyle\sum_{\substack{t_1+t_2+\cdots+t_n=t,\\ t_i\ge 0,\,1\leq i\leq n}}
x_1^{t_1}x_2^{t_2}\cdots x_n^{t_n}, & \text{\emph{if} } t\ge 0,
\end{cases}
\]
and denote $S_t(x_1,x_2,\cdots,x_n)$ by $S_t$. It is clear that $S_0=0$. The generating function for the $S_t$ is
\[
\sum_{t\ge 0} S_t(x_1,x_2,\ldots,x_n)z^t=\prod_{i=1}^n\frac{1}{1-x_i z}.
\]
\end{definition}


\begin{lemma}\label{ui}$[\ref{t44}]$
Let $\alpha_1,\alpha_2,\ldots,\alpha_n\in \mathbb{F}_q$ be pairwise distinct. Let $u_i=\displaystyle\prod_{\substack{j=1, j\ne i}}^{n}(\alpha_i-\alpha_j)^{-1}$ for $1\le i\le n$.
Then for $n\ge 3$, we have
\[
\sum_{i=1}^{n}u_i\alpha_i^{h}=
\begin{cases}
0, & \text{if }~0\le h\le n-2;\\[1mm]
S_{h-n+1}(\alpha_1,\alpha_2,\cdots,\alpha_n), & \text{if }~h\ge n-1.
\end{cases}
\]
\end{lemma}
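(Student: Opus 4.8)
The plan is to merge the two cases into a single generating-function identity and then read off both branches by comparing coefficients of a formal variable $z$. First I would form the ordinary generating function
$$
\sum_{h\ge 0}\Big(\sum_{i=1}^{n}u_i\alpha_i^{h}\Big)z^{h}
=\sum_{i=1}^{n}u_i\sum_{h\ge 0}(\alpha_i z)^{h}
=\sum_{i=1}^{n}\frac{u_i}{1-\alpha_i z},
$$
so that the entire lemma reduces to identifying this rational function explicitly and expanding it as a power series.

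The key step, and the one I expect to be the main obstacle, is the partial-fraction identity
$$
\sum_{i=1}^{n}\frac{u_i}{1-\alpha_i z}=\frac{z^{\,n-1}}{\prod_{j=1}^{n}(1-\alpha_j z)}.
$$
Clearing denominators, this is equivalent to the polynomial identity $\sum_{i=1}^{n}u_i\prod_{j\ne i}(1-\alpha_j z)=z^{\,n-1}$, where both sides have degree at most $n-1$. I would deduce it from Lagrange interpolation at the distinct nodes $\alpha_1,\ldots,\alpha_n$: interpolating the constant polynomial $g(x)=1$ gives $\sum_{i=1}^{n}u_i\prod_{j\ne i}(x-\alpha_j)=1$, since the numbers $u_i=\prod_{j\ne i}(\alpha_i-\alpha_j)^{-1}$ are exactly the Lagrange coefficients. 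Applying the coefficient reversal $g(x)\mapsto z^{\,n-1}g(1/z)$ sends each monic factor $\prod_{j\ne i}(x-\alpha_j)$ to $\prod_{j\ne i}(1-\alpha_j z)$ and sends the constant $1$ to $z^{\,n-1}$, which yields the claimed identity. The reason I single this out as the delicate point is that the naive route of evaluating both sides at $z=1/\alpha_k$ breaks down whenever some $\alpha_k=0$; the reversal argument sidesteps this, because reversing the coefficient sequence of a polynomial of degree $\le n-1$ is a purely linear operation insensitive to whether any $\alpha_i$ vanishes.

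Finally, using the generating function for the complete symmetric polynomials from Definition \ref{matrix}, I would expand
$$
\frac{z^{\,n-1}}{\prod_{j=1}^{n}(1-\alpha_j z)}
=z^{\,n-1}\sum_{t\ge 0}S_t\,z^{t}
=\sum_{h\ge n-1}S_{h-n+1}\,z^{h}.
$$
Matching the coefficient of $z^{h}$ here against the one produced in the first step gives $\sum_{i=1}^{n}u_i\alpha_i^{h}=0$ for $0\le h\le n-2$, since no term of that degree occurs on the right, and $\sum_{i=1}^{n}u_i\alpha_i^{h}=S_{h-n+1}$ for $h\ge n-1$, which is precisely the assertion. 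As a consistency check I would note that the boundary case $h=n-1$ yields $\sum_{i=1}^{n}u_i\alpha_i^{\,n-1}=1$, which agrees with comparing leading coefficients in the Lagrange expansion of $x^{\,n-1}$.
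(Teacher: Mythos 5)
Your proof is correct. Note that the paper itself offers no argument for this lemma --- it is quoted from reference [\ref{t44}] without proof --- so there is nothing internal to compare against; judged on its own, your generating-function argument is complete and sound. The one step that needs care, the identity $\sum_{i=1}^{n}u_i\prod_{j\ne i}(1-\alpha_j z)=z^{\,n-1}$, you handle correctly: both sides of the Lagrange identity $\sum_{i=1}^{n}u_i\prod_{j\ne i}(x-\alpha_j)=1$ are polynomials of degree at most $n-1$, so applying the coefficient reversal $P(x)\mapsto z^{\,n-1}P(1/z)$ to both sides is legitimate and, as you observe, avoids the division-by-zero issue that would arise from evaluating at $z=1/\alpha_k$ when some $\alpha_k=0$. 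All series manipulations are formal over $\mathbb{F}_q[[z]]$ and the final division is by a unit, so no convergence questions arise. Compared with the more common route --- interpolating $x^h$ directly for $h\le n-1$ and then handling $h\ge n$ by polynomial division modulo $\prod_j(x-\alpha_j)$ --- your single identity treats all $h$ uniformly, which is arguably cleaner. Your boundary check $\sum_i u_i\alpha_i^{\,n-1}=S_0=1$ is the correct normalization; it also exposes a typo in Definition \ref{matrix} of the paper, which asserts $S_0=0$ even though the paper later uses $S_0=1$.
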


\begin{lemma}\label{wi}$[\ref{t18}]$
Let $\alpha_1,\alpha_2,\ldots,\alpha_n\in \mathbb{F}_q$ be pairwise distinct and $G=\bigl(\alpha_{i}^{j-1}\bigr)_{1\le i,j\le n}$ be the $n\times n$ Vandermonde matrix, i.e.,
\begin{align}\label{G1}
G=\begin{pmatrix}
1&\cdots&1\\
\alpha_1&\cdots&\alpha_n\\
\vdots&&\vdots\\
\alpha_1^{\,n-1}&\cdots&\alpha_n^{\,n-1}
\end{pmatrix}.
\end{align}
For an integer $r$ with $0\le r\le n-1$, let $\boldsymbol{e}_{r+1}\in \mathbb{F}_q^{n}$ be the unit vector whose $(r+1)$-th entry equals $1$ and all other entries equal $0$.
Then the system of equations over $\F_q$
\[
G(w_1,w_2,\ldots,w_n)^{T}=\boldsymbol{e}_{r+1}
\]
has a unique solution $(w_1,\ldots,w_n)^{T}$ given by
\[
w_i=(-1)^{n+r+1}u_i\,\sigma_{n-1-r}(i),\qquad 1\le i\le n,
\]
where
$$u_i=\displaystyle\prod_{\substack{j=1,~ j\ne i}}^{n}(\alpha_i-\alpha_j)^{-1}
\quad \mathrm{and} \quad
\sigma_{n-1-r}(i)=\sum_{\{i_1,\ldots,i_{n-1-r}\}\subseteq \{1,\ldots,n\}\setminus\{i\}}
\ \prod_{s=1}^{n-1-r}\alpha_{i_s}.
$$
\end{lemma}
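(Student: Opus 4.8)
The plan is to recognize the system $G(w_1,\ldots,w_n)^{T}=\boldsymbol{e}_{r+1}$ as the problem of computing the $(r+1)$-th column of $G^{-1}$, and to identify the asserted $w_i$ as the coefficient of $x^r$ in the $i$-th Lagrange interpolation polynomial $L_i(x)=\prod_{j\ne i}\frac{x-\alpha_j}{\alpha_i-\alpha_j}$ for the nodes $\alpha_1,\ldots,\alpha_n$. Since the $\alpha_i$ are pairwise distinct, $G$ is the transpose of a Vandermonde matrix with nonzero determinant $\prod_{s<s'}(\alpha_{s'}-\alpha_s)$, hence invertible; this settles uniqueness at once, and it remains only to check that the closed form solves the system.

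For that verification I would start from the factorization $\prod_{j\ne i}(x-\alpha_j)=\sum_{t=0}^{n-1}(-1)^{\,n-1-t}\sigma_{n-1-t}(i)\,x^t$, which simply records that the coefficient of $x^{\,n-1-m}$ in a monic product of $n-1$ linear factors is $(-1)^m$ times the $m$-th elementary symmetric polynomial of the roots $\{\alpha_j\}_{j\ne i}$, i.e. exactly the quantity $\sigma_m(i)$ of the statement. Evaluating this identity at $x=\alpha_s$ and using that the product vanishes when $s\ne i$ while it equals $u_i^{-1}=\prod_{j\ne i}(\alpha_i-\alpha_j)$ when $s=i$, I obtain after multiplying through by $u_i$ the relation $\sum_{t=0}^{n-1}(-1)^{\,n-1-t}u_i\sigma_{n-1-t}(i)\,\alpha_s^{t}=\delta_{is}$ for all $i,s$.

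A one-line parity check gives $(-1)^{n+t+1}=(-1)^{\,n-1-t}$ (the exponents differ by $2t+2$), so the matrix $W=(W_{i,t})$ with $W_{i,t}=(-1)^{n+t+1}u_i\sigma_{n-1-t}(i)$, whose column indexed by $t=r$ is precisely the vector $(w_1,\ldots,w_n)^{T}$ of the statement, satisfies $WG=I_n$ by the previous display (reading $i,s$ as row and column indices and recalling that in the displayed $G$ the powers index the rows, $G_{t,s}=\alpha_s^{t}$). Because $G$ is invertible, a left inverse is automatically a two-sided inverse, so $GW=I_n$ as well; comparing the $(r+1)$-th columns of both sides yields exactly $G(w_1,\ldots,w_n)^{T}=\boldsymbol{e}_{r+1}$, as claimed.

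The only error-prone part here is bookkeeping rather than mathematics: keeping straight that the displayed $G$ is the transpose of the textbook Vandermonde, and tracking the three interacting signs $(-1)^{\,n-1-t}$, $(-1)^{n+t+1}$ and the one hidden in $u_i^{-1}=\prod_{j\ne i}(\alpha_i-\alpha_j)$. An alternative I would keep in reserve is a direct Cramer's-rule computation, where $w_i=\det(G_i)/\det(G)$ with $G_i$ obtained by replacing the $i$-th column of $G$ by $\boldsymbol{e}_{r+1}$; a cofactor expansion along that column reduces $\det(G_i)$ to an $(n-1)\times(n-1)$ minor equal to $\pm\big(\prod_{\,s<s',\,s,s'\ne i}(\alpha_{s'}-\alpha_s)\big)\sigma_{n-1-r}(i)$. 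This route reproduces the same formula but trades the clean interpolation argument for a generalized-Vandermonde minor evaluation, which is where its difficulty would concentrate.
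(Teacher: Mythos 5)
Your proof is correct. The paper itself gives no argument for this lemma --- it is imported verbatim from reference [\ref{t18}] --- so there is nothing internal to compare against; what you supply is the standard Lagrange-interpolation identification of $G^{-1}$, and every step checks out: the identity $\prod_{j\ne i}(x-\alpha_j)=\sum_{t=0}^{n-1}(-1)^{\,n-1-t}\sigma_{n-1-t}(i)\,x^{t}$ evaluated at the nodes does yield $WG=I_n$, the parity bookkeeping $(-1)^{n+t+1}=(-1)^{\,n-1-t}$ is right, and passing from the left inverse to $GW=I_n$ via invertibility of the (transposed) Vandermonde matrix legitimately converts the row-orthogonality relations into the column statement $G(w_1,\ldots,w_n)^{T}=\boldsymbol{e}_{r+1}$ actually being asserted. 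One small point worth making explicit if this were written up: the in-line indexing $G=(\alpha_i^{\,j-1})$ in the lemma conflicts with the displayed matrix, whose $(i,j)$ entry is $\alpha_j^{\,i-1}$; your proof correctly works with the displayed convention, which is the one used elsewhere in the paper.
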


\begin{lemma}$[\ref{t46}]$\label{l6}
Let $A$ be an invertible matrix, and let $\boldsymbol{u}, \boldsymbol{v}$ be column vectors. Then
$$
\det(A+\boldsymbol{u}\boldsymbol{v}^{T})=\det(A)\bigl(1+\boldsymbol{v}^{ T}A^{-1}\boldsymbol{u}\bigr).
$$
\end{lemma}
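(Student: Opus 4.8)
The plan is to prove this standard rank-one update identity by bordering the perturbation $\boldsymbol{u}\boldsymbol{v}^{T}$ into a single $(n+1)\times(n+1)$ block matrix and evaluating its determinant in two different ways via Schur complements. Writing $A\in\mathbb{F}_q^{n\times n}$ with $\boldsymbol{u},\boldsymbol{v}\in\mathbb{F}_q^{n}$, I would consider
$$
M=\begin{pmatrix} A & \boldsymbol{u}\\ -\boldsymbol{v}^{T} & 1\end{pmatrix}.
$$
First I would take the Schur complement of the invertible $(1,1)$-block $A$, which gives $\det(M)=\det(A)\bigl(1-(-\boldsymbol{v}^{T})A^{-1}\boldsymbol{u}\bigr)=\det(A)\bigl(1+\boldsymbol{v}^{T}A^{-1}\boldsymbol{u}\bigr)$. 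Next I would compute the same determinant by taking the Schur complement of the bottom-right scalar block $1$, obtaining $\det(M)=\det(1)\cdot\det\bigl(A-\boldsymbol{u}\cdot 1^{-1}\cdot(-\boldsymbol{v}^{T})\bigr)=\det(A+\boldsymbol{u}\boldsymbol{v}^{T})$. Equating the two evaluations of $\det(M)$ yields the asserted identity at once.

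The only point requiring care, and hence the main (minor) obstacle, is justifying the Schur-complement determinant formula $\det\bigl(\begin{smallmatrix} P & Q\\ R & S\end{smallmatrix}\bigr)=\det(P)\det(S-RP^{-1}Q)$ when $P$ is invertible. I would derive this from the block-triangular factorization
$$
\begin{pmatrix} P & Q\\ R & S\end{pmatrix}
=\begin{pmatrix} P & \boldsymbol{O}\\ R & I\end{pmatrix}
\begin{pmatrix} I & P^{-1}Q\\ \boldsymbol{O} & S-RP^{-1}Q\end{pmatrix},
$$
whose two factors are block-triangular with determinants $\det(P)$ and $\det(S-RP^{-1}Q)$ respectively. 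Applying this identity with the two pivot choices $P=A$ and $P=1$ produces exactly the two evaluations above. Since the whole argument is purely algebraic and uses only the multiplicativity of $\det$ and the invertibility of $A$, it is valid over any field, so nothing delicate arises from $\mathbb{F}_q$ being finite.

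As an alternative I could instead factor $A+\boldsymbol{u}\boldsymbol{v}^{T}=A\bigl(I+\boldsymbol{w}\boldsymbol{v}^{T}\bigr)$ with $\boldsymbol{w}=A^{-1}\boldsymbol{u}$, reducing the claim to the rank-one case $\det\bigl(I+\boldsymbol{w}\boldsymbol{v}^{T}\bigr)=1+\boldsymbol{v}^{T}\boldsymbol{w}$; completing $\boldsymbol{w}$ (when nonzero) to a basis exhibits $I+\boldsymbol{w}\boldsymbol{v}^{T}$ as a triangular operator with one diagonal entry $1+\boldsymbol{v}^{T}\boldsymbol{w}$ and the remaining $n-1$ equal to $1$, again avoiding any appeal to eigenvalues over the algebraic closure. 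I would nevertheless favor the bordered-matrix route for its symmetry and brevity, after which the lemma follows with no further computation.
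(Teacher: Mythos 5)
Your argument is correct. Note, however, that the paper does not prove this lemma at all: it is quoted verbatim from reference [37] (Ding and Zhou) as a known auxiliary fact, so there is no in-paper proof to compare against. Your bordered-matrix derivation is the standard self-contained proof of the matrix determinant lemma: the block factorization you write down is verified by direct multiplication, both factors are block-triangular, and the two Schur-complement evaluations of $\det(M)$ with $M=\left(\begin{smallmatrix} A & \boldsymbol{u}\\ -\boldsymbol{v}^{T} & 1\end{smallmatrix}\right)$ give exactly $\det(A)\bigl(1+\boldsymbol{v}^{T}A^{-1}\boldsymbol{u}\bigr)$ and $\det(A+\boldsymbol{u}\boldsymbol{v}^{T})$. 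The only point worth tightening is that your displayed factorization pivots on the $(1,1)$-block, so for the second evaluation you are implicitly invoking the symmetric identity $\det\!\left(\begin{smallmatrix} P & Q\\ R & S\end{smallmatrix}\right)=\det(S)\det(P-QS^{-1}R)$ for invertible $S$; this follows from the same factorization applied after a block permutation (or from an analogous factorization with the lower-right pivot), and with $S=1$ the invertibility is automatic. Your alternative route via $\det(I+\boldsymbol{w}\boldsymbol{v}^{T})=1+\boldsymbol{v}^{T}\boldsymbol{w}$ is equally valid over $\mathbb{F}_q$. Either way, you have supplied a proof where the paper offers only a citation, which is a net gain in self-containedness.
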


\section{The parity check matrix of $\mathcal{C}_{h}$}
In this section, to gain a deeper understanding of $\mathcal{C}_h$ and facilitate the arguments in the subsequent sections, we present an explicit characterization of a parity check matrix of $\mathcal{C}_h$ as follows.

\begin{theorem}\label{th1}
Let $\mathcal{C}_{h}$ and $G_{h}$ be defined by Eq.\eqref{Ch} and Eq.\eqref{G}, respectively. For $1\leq i\leq n$, define $\widetilde{w}_i=\frac{w_i}{u_i}$, where $u_i$ and $\sigma_{n-1-h}(i)$ are defined in Lemma $\ref{wi}$, respectively.
Then
\begin{equation}\label{H}
\footnotesize
H_{h}=
\begin{pmatrix}
\dfrac{u_1}{v_1}  & \cdots & \dfrac{u_n}{v_n}\\[2mm]
\dfrac{u_1}{v_1}\alpha_1  & \cdots & \dfrac{u_n}{v_n}\alpha_n\\
\vdots  & \ddots & \vdots\\
\dfrac{u_1}{v_1}\alpha_1^{\,n-k-(l+2)}  &
\cdots &
\dfrac{u_n}{v_n}\alpha_n^{\,n-k-(l+2)}\\[2mm]
\dfrac{u_1}{v_1}\!\Big(\alpha_1^{\,n-k-(l+1)}-\widetilde{w}_1\Psi_l\Big)  &
\cdots &
\dfrac{u_n}{v_n}\!\Big(\alpha_n^{\,n-k-(l+1)}- \widetilde{w}_n\Psi_l\Big)\\
\vdots  & \ddots & \vdots\\
\dfrac{u_1}{v_1}\!\Big(\alpha_1^{\,n-k-2}-\widetilde{w}_1\Psi_1\Big) &
\cdots &
\dfrac{u_n}{v_n}\!\Big(\alpha_n^{\,n-k-2}-\widetilde{w}_n\Psi_1\Big)\\[1em]
\dfrac{u_1}{v_1}\!\Big(\alpha_1^{\,n-k-1}-\widetilde{w}_1\Psi_0\Big) &
\cdots &
\dfrac{u_n}{v_n}\!\Big(\alpha_n^{\,n-k-1}-\widetilde{w}_n\Psi_0\Big)
\end{pmatrix}_{(n-k)\times n}
\end{equation}
is a parity check matrix of $\mathcal{C}_{h}$, where $\Psi_s:=\sum_{t=s}^{l}\eta_t\,S_{t-s}$ for $0\le s\le l$.

\end{theorem}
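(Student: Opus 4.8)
The plan is to verify the two defining properties of a parity check matrix: that every row of $H_h$ is orthogonal to every row of $G_h$, i.e. $G_hH_h^{T}=\boldsymbol{O}$, and that $H_h$ has full rank $n-k$. Throughout I take $w_i$ to be the solution of $G(w_1,\ldots,w_n)^{T}=\boldsymbol e_{h+1}$ from Lemma \ref{wi} with $r=h$, so that $\sum_{i=1}^{n}\alpha_i^{j}w_i=\delta_{j,h}$ for $0\le j\le n-1$ and $\widetilde w_i=w_i/u_i$. Since every polynomial in $\mathcal{F}(\mathcal{L},\{h\},\boldsymbol{\eta})$ has degree at most $k+l\le n-1$, the evaluation map is injective, so $\dim\mathcal{C}_h=k$ and $\dim\mathcal{C}_h^{\perp}=n-k$. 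Consequently, once $G_hH_h^{T}=\boldsymbol O$ is established the row space of $H_h$ lies in $\mathcal{C}_h^{\perp}$, and it only remains to prove that the $n-k$ rows of $H_h$ are linearly independent.

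For the orthogonality I would compute $\langle \boldsymbol g_a,\boldsymbol\rho_b\rangle$, where $\boldsymbol g_a$ is the row of $G_h$ attached to the exponent $a\in\{0,\ldots,k-1\}$ and $\boldsymbol\rho_b$ is the row of $H_h$ attached to the base exponent $b\in\{0,\ldots,n-k-1\}$. The factor $v_i\cdot\frac{u_i}{v_i}=u_i$ collapses each product to sums of the shape $\sum_i u_i\alpha_i^{m}$, evaluated by Lemma \ref{ui} as $0$ for $m\le n-2$ and as $S_{m-n+1}$ for $m\ge n-1$, together with twisting sums $\sum_i \widetilde w_iu_i\alpha_i^{m}=\sum_i w_i\alpha_i^{m}=\delta_{m,h}$ from Lemma \ref{wi} (valid for $0\le m\le n-1$). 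Splitting into four cases ($a\ne h$ or $a=h$, against a top row or a bottom row), three cases reduce at once: the pure power sums vanish because the relevant exponents never exceed $n-2$ (using $l\le n-k-1$ for the top rows), while the factors $\delta_{m,h}$ vanish either because $a\ne h$ or because the twist exponents $k+t\in\{k,\ldots,n-1\}$ all lie strictly above $h$.

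The essential case is $a=h$ paired with a bottom row $\boldsymbol\rho_b$ with $b=n-k-1-s$. Expanding $u_i\bigl(\alpha_i^{h}+\sum_{t}\eta_t\alpha_i^{k+t}\bigr)\bigl(\alpha_i^{b}-\widetilde w_i\Psi_s\bigr)$ and summing over $i$ produces four sums: the term $\sum_i u_i\alpha_i^{h+b}$ vanishes since $h+b\le n-2$; the term $-\Psi_s\sum_i w_i\alpha_i^{h}=-\Psi_s$ survives; the twist term $\sum_t\eta_t\sum_i u_i\alpha_i^{k+t+b}$ equals $\sum_t\eta_tS_{t-s}=\Psi_s$ by Lemma \ref{ui} (because $k+t+b=n-1+t-s$ and $S_{t-s}=0$ for $t<s$); and the last term $-\Psi_s\sum_t\eta_t\sum_i w_i\alpha_i^{k+t}$ vanishes as each $k+t\ne h$. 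The surviving pieces cancel, $-\Psi_s+\Psi_s=0$. This cancellation is precisely what the definition $\Psi_s=\sum_{t=s}^{l}\eta_tS_{t-s}$ is engineered to produce, and matching the complete symmetric polynomial $S_{t-s}$ delivered by Lemma \ref{ui} against the defining sum of $\Psi_s$ is the main obstacle of the proof; the remaining difficulty is purely bookkeeping of the exponent ranges.

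Finally, for the rank I would suppose $\boldsymbol c^{T}H_h=\boldsymbol 0$ for some $\boldsymbol c\in\mathbb{F}_q^{n-k}$. Cancelling the nonzero scalars $\frac{u_i}{v_i}$, this says $R(\alpha_i)=\lambda\,\widetilde w_i$ for every $i$, where $R(x)=\sum_{b=0}^{n-k-1}c_bx^{b}$ has degree at most $n-k-1$ and $\lambda=\sum_{s=0}^{l}c_{n-k-1-s}\Psi_s$. Rewriting this as $u_iR(\alpha_i)=\lambda w_i$, multiplying by $\alpha_i^{h}$ and summing over $i$, the left side $\sum_{b}c_b\sum_i u_i\alpha_i^{h+b}$ vanishes since $h+b\le n-2$, while the right side is $\lambda\sum_i w_i\alpha_i^{h}=\lambda$; hence $\lambda=0$. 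Then $R$ is a polynomial of degree $<n$ vanishing at the $n$ distinct points $\alpha_1,\ldots,\alpha_n$, so $R\equiv 0$ and $\boldsymbol c=\boldsymbol 0$. Thus the rows of $H_h$ are linearly independent, and combined with $G_hH_h^{T}=\boldsymbol O$ this shows $H_h$ is a parity check matrix of $\mathcal{C}_h$.
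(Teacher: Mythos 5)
Your proof is correct, and it takes a genuinely different route from the paper. The paper's proof is structural: it builds an auxiliary $n\times n$ matrix $H$ whose columns interleave the vectors $(u_i\alpha_i^{j})_i$ with the vector $(w_i)_i$, shows that $GH$ is unit lower triangular, then conjugates by elementary matrices $P,P'$ (which absorb the twist coefficients $\eta_t$) so that the last $n-k$ columns of $HP'$ annihilate the generator matrix; full rank of $H_h$ falls out of $\det(H)\neq 0$. You instead verify the formula for $H_h$ directly: you compute every inner product $\langle \boldsymbol g_a,\boldsymbol\rho_b\rangle$ using the power-sum identities $\sum_i u_i\alpha_i^{m}=0$ for $m\le n-2$, $=S_{m-n+1}$ for $m\ge n-1$ (Lemma \ref{ui}) and $\sum_i w_i\alpha_i^{m}=\delta_{m,h}$ (Lemma \ref{wi}), and your case analysis is complete and accurate --- in particular the essential cancellation $-\Psi_s+\sum_{t}\eta_t S_{t-s}=0$ in the case $a=h$, $b=n-k-1-s$ is exactly right, as are the exponent bounds $h+b\le n-2$ and $k+t+b=n-1+t-s$ that make it work. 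Your rank argument (a vanishing linear combination forces $R(\alpha_i)=\lambda\widetilde w_i$, then $\lambda=0$ by pairing against $\alpha_i^{h}$, then $R\equiv 0$ by degree count) replaces the paper's appeal to $\det(H)\neq0$ and is self-contained. What each approach buys: the paper's derivation explains how the formula for $H_h$ is discovered (the row/column operations construct it), and reuses the invertible matrix $H$ for the rank claim at no extra cost; your verification makes transparent exactly where each hypothesis is consumed ($l\le n-k-1$ for the top rows, $h\le k-1$ so that the twist exponents $k+t$ never hit $h$, and the definition of $\Psi_s$ engineered to match the $S_{t-s}$ produced by Lemma \ref{ui}), at the price of having to be handed the formula in advance. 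One cosmetic point: Lemma \ref{ui} is stated for $n\ge 3$, which holds here since $h\ge 1$ forces $k\ge 2$ and $l\le n-k-1$ forces $n\ge k+1$, but it is worth saying so explicitly.
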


\begin{proof}
From the definition of $G$ in Eq.\eqref{G1}, together with Lemmas \ref{ui} and \ref{wi}, we have
\begin{equation}\label{eq7}
G(u_1,u_2,\ldots,u_n)^{T}=(0,\ldots,0,1)^{T}
\end{equation}
and
\begin{equation}\label{eq8}
G(w_1,w_2,\ldots,w_n)^{T}=\boldsymbol{e}_{h+1}.
\end{equation}
If we put
\[
H=
\begin{pmatrix}
u_1\alpha_1^{n-1} & \cdots & u_1\alpha_1^{n-h} & w_1 & u_1\alpha_1^{\,n-h-2} & \cdots & u_1\\
\vdots & \ddots & \vdots & \vdots & \vdots & \ddots & \vdots\\
u_n\alpha_n^{n-1} & \cdots & u_n\alpha_n^{n-h} & w_n & u_n\alpha_n^{\,n-h-2} & \cdots & u_n
\end{pmatrix}_{n\times n},
\]
then it follows from Eq.\eqref{eq7} and Eq.\eqref{eq8} and after a simple calculation that the $n\times n$ matrix $GH$  given by
$$
\small
GH  =
\begin{array}{@{}c@{\;}c@{}}
\left(
\begin{array}{c|c}
\begin{array}{ccc}
1 & \cdots & 0 \\
\vdots & \ddots & \vdots \\
* & \cdots & 1
\end{array}
&
\begin{array}{ccccc}
~~~~0 &~~~~~~~~~~~~~~ \cdots &&&~~~~~~\,~~0 \\
~~~~\vdots & && & ~~~~~~~~\,\vdots \\
~~~~0 & ~~~~~~~~~~~~~~\cdots&&&~~~~~~~\,~0
\end{array}
\\ \hline
\begin{array}{ccc}
* & \cdots & * \\
\vdots &  & \vdots \\
* & \cdots & *
\end{array}
&
\begin{array}{ccccc}
1 & 0 & 0 & \cdots& 0 \\
S_1 & 1 & 0 & \cdots& 0 \\
S_2 & S_1 & 1 & \cdots &0 \\
\vdots & \vdots & \vdots & \ddots& \vdots \\
S_{n-k-1} & S_{n-k-2} & S_{n-k-3} & \cdots& 1
\end{array}
\end{array}
\right)\;
\\[-0.2ex]
\multicolumn{1}{c}{%
\begin{array}{@{}c@{\hspace{\arrayrulewidth}}c@{}}
\hphantom{\begin{array}{ccc}
1 & \cdots & 0\\
\vdots & \ddots & \vdots\\
* & \cdots & 1
\end{array}}
&
\begin{array}{@{}ccccc@{}}
\scriptstyle ~k+1
& \phantom{S_{n-k-2}}
& \phantom{S_{n-k-3}}
& \phantom{\cdots}
& \phantom{1}
\end{array}
\end{array}%
}
&
\end{array}
\raisebox{0.99\baselineskip}{\(\scriptstyle k+1\)},
$$
\normalsize
is a lower triangular matrix over $\mathbb{F}_q$ whose main diagonal entries are
all equal to 1, where the entries denoted by $\ast$ may be nonzero and the bottom-right block matrix is an $(n-k)\times (n-k)$ submatrix.


Put $P=P_{k+l+1}P_{k+l}\cdots P_{k+1}$ and $P'=P'_{k+1}\cdots P'_{k+l+1}$, where $P_m=E_{h+1,m}(\eta_{m-k-1})$ and $P'_m=E_{h+1,m}(-\eta_{m-k-1})$ for $k+1\le m\le k+l+1$, and $E_{i,j}(\lambda)$ denotes the elementary matrix that adds $\lambda$ times the $j$-th row to the $i$-th row (resp. $\lambda$ times the $i$-th column to the $j$-th column) by left (resp. right) multiplication.
Then we have
$$PGHP'=\left(\begin{array}{cccc}
1 & 0  & \cdots& 0 \\
\ast & 1 &  \cdots& 0 \\
\vdots &  \vdots & \ddots& \vdots \\
\ast & \ast & \cdots& 1
\end{array}\right),$$
which is still a unit lower triangular matrix over $\mathbb{F}_q$. 

On the other hand, let $H'_h$ be the submatrix consisting of the last $(n-k)$ columns of $HP'$ and $G'_{h}$ the submatrix consisting of the first $k$ rows of $PG$. It is easy to verify that $$G'_{h}H_{h}^{'}=\boldsymbol{O}.$$
Specifically, ${H'_{h}}^{T}=\operatorname{diag}\{v_1,v_2,\ldots,v_n\}H_h$ and $\operatorname{diag}\{v_1,v_2,\ldots,v_n\}G'_{h}=G_{h}$, which implies that $G_{h}H_{h}=\boldsymbol{O}$. Moreover, since $\det(GH)\neq0$, we have $\det(H)\neq0$, which shows that $\operatorname{rank}(H_{h})=n-k$, and thus $H_{h}$ is the parity check matrix of $\mathcal{C}_{h}$. This completes the proof.
\end{proof}

\begin{remark}
If we set $h=0$, the result coincides with that of Theorem 3.1 in [1]. Moreover, if $\mathcal{L}=\{0\}$, then our result is consistent with Theorem 2.8 in [2].
\end{remark}


\section{A necessary and sufficient condition for $\mathcal{C}_h$ to be AMDS}
In this section, we are concerned with the AMDS property of the code $\mathcal{C}_h$, and we begin by the following lemma.
\begin{lemma}\label{fth}$[\ref{t37}]$
Let $\boldsymbol{\alpha}=(\alpha_1,\alpha_2,\ldots,\alpha_n)\in\mathbb{F}_q^n$ with $\alpha_1,\alpha_2,\ldots,\alpha_n$  pairwise distinct, and let $\mathcal{I}=\{i_1,i_2,\ldots,i_k\}$ be a $k$-subset of
$\{1,2,\ldots,n\}$. Let
$g(x)=\prod_{j=1}^k (x-\alpha_{i_j})=\sum_{j=0}^k c_j x^{k-j}$, where the coefficients $c_j$ are uniquely determined by the values of
$\alpha_{i_1},\ldots,\alpha_{i_k}$. For any $0\le t\le n-k-1$, define $f_{r,s}\in\mathbb{F}_q$ uniquely
for $0\le r\le k-1$ such that
\begin{equation*}
(\alpha_{i_1}^{k+t},\alpha_{i_2}^{k+t},\ldots,\alpha_{i_k}^{k+t})
=(f_{t,0},f_{t,1},\ldots,f_{t,k-1})
\begin{pmatrix}
1 & 1 & \cdots & 1\\
\alpha_{i_1} & \alpha_{i_2} & \cdots & \alpha_{i_k}\\
\vdots & \vdots & \ddots & \vdots\\
\alpha_{i_1}^{k-1} & \alpha_{i_2}^{k-1} & \cdots & \alpha_{i_k}^{k-1}
\end{pmatrix}.
\end{equation*}
Then we have
\begin{equation*}
f_{t,r}=-\sum_{i=0}^{t}c_{k-r+i}\,e_{t-i}=-(e_t,e_{t-1},\ldots,e_0)\boldsymbol{c}^{(r)}_t
=-\boldsymbol{e}_{t+1}^T A_{\mathcal{I},t}^{-1}\boldsymbol{c}^{(r)}_t.
\end{equation*}
where $\boldsymbol{e}_{t+1}=(0,\ldots,0,1)^{T}$,
$\boldsymbol{c}^{(r)}_t=(c_{k-r},c_{k-r+1},\ldots,c_{k-r+t})^{T}$ and
$$
A_{\mathcal{I},t}=\begin{pmatrix}
c_0 & 0   & \cdots & 0\\
c_1 & c_0 & \cdots & 0\\
\vdots & \vdots & \ddots & \vdots\\
c_t & c_{t-1} & \cdots & c_0
\end{pmatrix}.
$$
\end{lemma}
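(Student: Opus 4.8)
The plan is to recognize the defining relation for the $f_{t,r}$ as nothing more than the reduction of the monomial $x^{k+t}$ modulo $g(x)$, and then to read off the reduced coefficients from the Laurent expansion of $x^{k+t}/g(x)$ at infinity; the three asserted expressions are then three ways of packaging the same sum.

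First I would record the key reformulation. Since $\alpha_{i_1},\dots,\alpha_{i_k}$ are pairwise distinct, the $k\times k$ Vandermonde matrix appearing in the statement is invertible, so the row vector $(f_{t,0},\dots,f_{t,k-1})$ is uniquely determined, and reading the defining matrix equation columnwise gives $\alpha_{i_j}^{k+t}=\sum_{r=0}^{k-1}f_{t,r}\alpha_{i_j}^{\,r}$ for each $j$. Because $g(\alpha_{i_j})=0$, the remainder $\rho_t(x):=x^{k+t}\bmod g(x)$ also satisfies $\rho_t(\alpha_{i_j})=\alpha_{i_j}^{k+t}$; as both $\rho_t(x)$ and $\sum_{r}f_{t,r}x^{r}$ have degree less than $k$ and agree at the $k$ distinct points $\alpha_{i_j}$, they coincide. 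Hence $f_{t,r}=\mathrm{Coeff}_{r}(\rho_t(x))$, and the entire lemma reduces to computing this remainder.

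Next I would compute $\rho_t$ by expanding $x^{k+t}/g(x)$ in descending powers of $x$. Writing $g(x)=x^k\prod_{j=1}^{k}(1-\alpha_{i_j}x^{-1})$ and invoking the generating function of the complete symmetric polynomials from Definition \ref{matrix} with $z=x^{-1}$, I obtain
$$\frac{x^{k+t}}{g(x)}=x^{t}\prod_{j=1}^{k}\frac{1}{1-\alpha_{i_j}x^{-1}}=\sum_{s\ge 0}e_s\,x^{\,t-s},\qquad e_s:=S_s(\alpha_{i_1},\dots,\alpha_{i_k}).$$
The terms with $s>t$ have negative degree, so $Q(x)=\sum_{s=0}^{t}e_s x^{t-s}$ is the polynomial quotient and $\rho_t(x)=x^{k+t}-g(x)Q(x)$. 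Extracting the coefficient of $x^{r}$ for $0\le r\le k-1$ (where the $x^{k+t}$ term contributes nothing) from $g(x)Q(x)=\big(\sum_{j=0}^{k}c_j x^{k-j}\big)\big(\sum_{s=0}^{t}e_s x^{t-s}\big)$ forces $k-j+t-s=r$, i.e. $j=k-r+i$ with $i=t-s$, and yields exactly $f_{t,r}=-\sum_{i=0}^{t}c_{k-r+i}\,e_{t-i}$, under the convention $c_m=0$ for $m<0$ or $m>k$. This is the first of the three asserted expressions.

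Finally I would match the matrix forms. With $e_s$ as above, the identity $\prod_{j}(1-\alpha_{i_j}z)\cdot\sum_{s\ge0}e_s z^s=1$ is precisely $\sum_{i=0}^{s}c_i e_{s-i}=\delta_{s,0}$, which says that the lower-triangular Toeplitz matrix $A_{\mathcal{I},t}$ built from $c_0,\dots,c_t$ has inverse equal to the lower-triangular Toeplitz matrix built from $e_0,\dots,e_t$; in particular its bottom row is $(e_t,e_{t-1},\dots,e_0)$. Consequently $-\boldsymbol{e}_{t+1}^{T}A_{\mathcal{I},t}^{-1}\boldsymbol{c}^{(r)}_t=-(e_t,\dots,e_0)\boldsymbol{c}^{(r)}_t=-\sum_{i=0}^{t}c_{k-r+i}e_{t-i}$, establishing the remaining two equalities simultaneously. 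The one point requiring genuine care is this translation between the generating-function coefficients $e_s$ and the inverse Toeplitz matrix $A_{\mathcal{I},t}^{-1}$, together with the bookkeeping of the degenerate summation ranges via the convention on $c_m$; by contrast, the reduction to a polynomial remainder and the Laurent expansion itself are essentially formal.
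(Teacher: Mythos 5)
The paper does not prove this lemma at all: it is quoted verbatim from Hu et al.\ [\ref{t37}], so there is no internal proof to compare against. Your argument is correct and self-contained. The reduction of the defining Vandermonde relation to ``$f_{t,r}$ is the $r$-th coefficient of $x^{k+t}\bmod g(x)$'' is exactly right (two polynomials of degree less than $k$ agreeing at $k$ distinct points), the Laurent expansion $x^{k+t}/g(x)=\sum_{s\ge0}e_s x^{t-s}$ with $e_s=S_s(\alpha_{i_1},\dots,\alpha_{i_k})$ correctly identifies the quotient and hence the remainder, and the index bookkeeping $j=k-r+i$ with the convention $c_m=0$ for $m>k$ reproduces the first formula. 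The Toeplitz step is also sound: $\bigl(\sum_j c_j z^j\bigr)\bigl(\sum_s e_s z^s\bigr)=1$ is precisely the statement that $A_{\mathcal{I},t}^{-1}$ is the lower-triangular Toeplitz matrix built from $e_0,\dots,e_t$, whose bottom row is $(e_t,\dots,e_0)$, giving the other two expressions. Two cosmetic remarks: the lemma as printed never defines the symbols $e_s$, and your identification $e_s=S_s$ is the only one consistent with the third expression, so it is worth stating explicitly as you do; and your use of $e_0=S_0=1$ is correct even though the paper's Definition \ref{matrix} contains the typo ``$S_0=0$'' (the generating function forces $S_0=1$).
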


The following lemma is a special case of Theorem~1 in [1] and provides a necessary and sufficient condition for the code $C_h$ to be MDS, which will be used in the proof of the subsequent theorem. For completeness, we give a simple proof.

\begin{lemma}\label{ChMDS}$[\ref{t37}]$
Let the notion and symbols be defined
as those in Lemma~\ref{fth}. Then the $(\mathcal{L},\mathcal{P})$-TGRS code $\mathcal{C}_{h}$ defined by Eq.\eqref{Ch} is MDS if and only if $\boldsymbol{\eta}\in\Phi$, where $\Phi=\Big\{\boldsymbol{\eta}\in \mathbb{F}_q^{l+1}\setminus\{0\}\,|\,\forall\,\mathcal{I}=\{i_1,\cdots,i_k\}\subsetneq\{1,\ldots,n\}, 1-\sum_{t\in\mathcal{L}}\eta_t\,\boldsymbol{e}_{t+1}^T A_{\mathcal{I},t}^{-1}\boldsymbol{c}^{(h)}_t\neq0\Big\}$.
\end{lemma}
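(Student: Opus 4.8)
The plan is to apply the MDS criterion from Lemma~\ref{MDS} directly to the generator matrix $G_h$ in Eq.\eqref{G}, reducing the MDS condition to the nonvanishing of every $k\times k$ minor. Fix an arbitrary $k$-subset $\mathcal{I}=\{i_1,\ldots,i_k\}\subseteq\{1,\ldots,n\}$ and let $G_{h,\mathcal{I}}$ denote the corresponding $k\times k$ submatrix of $G_h$. Since the columns carry nonzero scalars $v_{i_j}\in\mathbb{F}_q^{\ast}$, we may factor these out and they cannot affect whether the minor vanishes; so the whole problem rests on the determinant of the matrix whose rows are the powers $\alpha^0,\ldots,\alpha^{h-1}$, then the twisted row $\alpha^{h}+\sum_{t\in\mathcal{L}}\eta_t\,\alpha^{k+t}$, then $\alpha^{h+1},\ldots,\alpha^{k-1}$, evaluated at $\alpha_{i_1},\ldots,\alpha_{i_k}$.

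The key idea is to isolate the twist as a rank-one perturbation of a Vandermonde-type matrix. First I would observe that the untwisted matrix (replacing the twisted row by the pure monomial row $\alpha^h$) is exactly the Vandermonde matrix on $\alpha_{i_1},\ldots,\alpha_{i_k}$, whose determinant is the nonzero product $\prod_{p<p'}(\alpha_{i_{p'}}-\alpha_{i_p})$. The twisted row replaces $(\alpha_{i_j}^h)_j$ by $(\alpha_{i_j}^h+\sum_{t\in\mathcal{L}}\eta_t\,\alpha_{i_j}^{k+t})_j$, which is an additive perturbation confined to a single row. I would then expand the determinant along that single row's perturbation, using multilinearity to write
\[
\det(G_{h,\mathcal{I}})
=\Big(\prod_{j=1}^{k} v_{i_j}\Big)\,\det(V_{\mathcal{I}})\Big(1+\sum_{t\in\mathcal{L}}\eta_t\,R_t\Big),
\]
where $V_{\mathcal{I}}$ is the Vandermonde matrix and $R_t$ is the ratio of the determinant obtained by substituting the row $(\alpha_{i_j}^{k+t})_j$ into the $(h+1)$-th row of $V_{\mathcal{I}}$ to $\det(V_{\mathcal{I}})$. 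This is precisely a rank-one update of the form handled by Lemma~\ref{l6}, with $A=V_{\mathcal{I}}$, $\boldsymbol{u}=\boldsymbol{e}_{h+1}$ the row-selection vector, and $\boldsymbol{v}$ the vector of perturbations $\sum_{t\in\mathcal{L}}\eta_t(\alpha_{i_j}^{k+t})_j$.

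The main obstacle, and the heart of the proof, is identifying $R_t$ with the quantity $\boldsymbol{e}_{t+1}^T A_{\mathcal{I},t}^{-1}\boldsymbol{c}^{(h)}_t$ appearing in the statement. To do this I would invoke Lemma~\ref{fth}: expanding $(\alpha_{i_j}^{k+t})_j$ in the Vandermonde basis $\{1,\alpha,\ldots,\alpha^{k-1}\}$ gives the coefficients $(f_{t,0},\ldots,f_{t,k-1})$, and substituting the $(h+1)$-th such coefficient back into Cramer's rule yields $R_t=f_{t,h}$. Lemma~\ref{fth} then supplies the closed form $f_{t,h}=-\boldsymbol{e}_{t+1}^T A_{\mathcal{I},t}^{-1}\boldsymbol{c}^{(h)}_t$, so that
\[
\det(G_{h,\mathcal{I}})
=\Big(\prod_{j=1}^{k} v_{i_j}\Big)\,\det(V_{\mathcal{I}})
\Big(1-\sum_{t\in\mathcal{L}}\eta_t\,\boldsymbol{e}_{t+1}^T A_{\mathcal{I},t}^{-1}\boldsymbol{c}^{(h)}_t\Big).
\]
Since the prefactor is always nonzero, this minor is nonzero exactly when the bracketed factor is nonzero. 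Finally, I would note that the case $\boldsymbol{\eta}=\boldsymbol 0$ recovers the GRS code (already MDS) but is excluded from $\Phi$ because the problem concerns genuinely twisted codes; ranging over all $k$-subsets $\mathcal{I}\subsetneq\{1,\ldots,n\}$ and invoking Lemma~\ref{MDS} then gives the claimed equivalence $\mathcal{C}_h$ is MDS $\iff\boldsymbol{\eta}\in\Phi$. The only delicate bookkeeping is confirming that Cramer's rule picks out exactly the $(h+1)$-th coefficient and aligning the index conventions of Lemma~\ref{fth} with the row structure of $G_h$.
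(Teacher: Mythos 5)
Your proposal follows essentially the same route as the paper's proof: reduce to nonvanishing $k\times k$ minors via Lemma~\ref{MDS}, expand the determinant by multilinearity in the single twisted row, and use Lemma~\ref{fth} to identify the perturbation term with $f_{t,h}=-\boldsymbol{e}_{t+1}^T A_{\mathcal{I},t}^{-1}\boldsymbol{c}^{(h)}_t$, yielding the factorization $\det(G_{h,\mathcal{I}})=\bigl(1-\sum_{t\in\mathcal{L}}\eta_t\,\boldsymbol{e}_{t+1}^T A_{\mathcal{I},t}^{-1}\boldsymbol{c}^{(h)}_t\bigr)\prod_{m<j}(\alpha_{i_j}-\alpha_{i_m})$ up to the nonzero $v$-prefactor. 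The invocation of Lemma~\ref{l6} is an optional repackaging of the same multilinearity step, and your remark on the exclusion of $\boldsymbol{\eta}=\boldsymbol{0}$ from $\Phi$ is a reasonable gloss on a point the paper itself leaves implicit.
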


\begin{proof}
By Lemma \ref{MDS}, $\mathcal{C}_{h}$ is MDS if and only if the determinant of any $k\times k$ submatrix of $G_h$ is nonzero. For $\mathcal{I}=\{i_1,\cdots,i_k\}\subsetneq\{1,\ldots,n\}$, the submatrix $G_I$ of $G_h$ corresponding to $\mathcal{I}$ is given by
$$G_I=\left(
    \begin{array}{ccc}
      1 & \cdots & 1 \\
      \vdots & \ddots & \vdots \\
      \alpha_{i_1}^{h-1} & \cdots & \alpha_{i_k}^{h-1} \\
      \alpha_{i_1}^{h}+\sum_{t\in \mathcal{L}} \eta_{t}\,\alpha_{i_1}^{k+t} &\cdots & \alpha_{i_k}^{h}+\sum_{t\in \mathcal{L}} \eta_{t}\,\alpha_{i_k}^{k+t} \\
      \alpha_{i_1}^{h+1} & \cdots & \alpha_{i_k}^{h+1} \\
      \vdots & \ddots & \vdots \\
      \alpha_{i_1}^{k-1} & \cdots & \alpha_{i_k}^{k-1} \\
    \end{array}
  \right)_{k\times k}.
$$
Then by Lemma $\ref{fth}$ and linearity of the determinant in a row, we have
\begin{align*}
\det(G_I)&=\det\left(
    \begin{array}{ccc}
      1 & \cdots & 1 \\
      \alpha_{i_1} & \cdots & \alpha_{i_k} \\
      \vdots & \ddots & \vdots \\
      \alpha_{i_1}^{k-2} &\cdots & \alpha_{i_k}^{k-2} \\
      \alpha_{i_1}^{k-1} & \cdots & \alpha_{i_k}^{k-1} \\
    \end{array}
  \right)+\sum_{t\in \mathcal{L}} \eta_{t}\cdot \det
  \left(
    \begin{array}{ccc}
      1 & \cdots & 1 \\
      \vdots & \ddots & \vdots \\
      \alpha_{i_1}^{h-1} & \cdots & \alpha_{i_k}^{h-1} \\
      \alpha_{i_1}^{k+t} &\cdots & \alpha_{i_k}^{k+t} \\
      \alpha_{i_1}^{h+1} & \cdots & \alpha_{i_k}^{h+1} \\
      \vdots & \ddots & \vdots \\
      \alpha_{i_1}^{k-1} & \cdots & \alpha_{i_k}^{k-1} \\
    \end{array}
  \right)\\
&=\prod_{1\le m<j\le k}(\alpha_j-\alpha_m)+\sum_{t\in\mathcal L}\eta_t\,\det\begin{pmatrix}
1 & \cdots & 1\\
\alpha_{i_1} & \cdots & \alpha_{i_k}\\
\vdots & \ddots & \vdots\\
\alpha_{i_1}^{h-1} & \cdots & \alpha_{i_k}^{h-1}\\
\sum_{s=0}^{k-1} f_{t,s}\alpha_{i_1}^{s} & \cdots & \sum_{s=0}^{k-1} f_{t,s}\alpha_{i_k}^{s}\\
\alpha_{i_1}^{h+1} & \cdots & \alpha_{i_k}^{h+1}\\
\vdots & \ddots & \vdots\\
\alpha_{i_1}^{k-1} & \cdots & \alpha_{i_k}^{k-1}
\end{pmatrix}\\
&=\prod_{1\le m<j\le k}(\alpha_{i_j}-\alpha_{i_m})+\sum_{t\in\mathcal L}\eta_t\,f_{t,h}\prod_{1\le m<j\le k}(\alpha_{i_j}-\alpha_{i_m})\\
&=\left(1-\sum_{t\in\mathcal L}\eta_t\,\boldsymbol{e}_{t+1}^T A_{\mathcal{I},t}^{-1}\boldsymbol{c}^{(h)}_t\right)\prod_{1\le m<j\le k}(\alpha_{i_j}-\alpha_{i_m}).
\end{align*}
So it follows that $\det(G_I)\neq0$ if and only if $1-\sum_{t\in\mathcal L}\eta_t\,\boldsymbol{e}_{t+1}^T A_{I,t}^{-1}\boldsymbol{c}^{(h)}_t\neq0$ since $\prod_{1\le m<j\le k}(\alpha_{i_j}-\alpha_{i_m})\neq0$. This completes the proof.
\end{proof}

In the following, we present a necessary and sufficient condition for the
$(\mathcal{L},\mathcal{P})$-TGRS code $\mathcal{C}_{h}$ to be AMDS.

\begin{theorem}Let the notion and symbols be defined
as those in Lemma \ref{ChMDS}. The $(\mathcal{L},\mathcal{P})$-TGRS code $\mathcal{C}_{h}$ defined by Eq.\eqref{Ch} is AMDS if and only if the following two conditions hold simultaneously,
\begin{itemize}
  \item [$(1)$] $\boldsymbol{\eta}\notin\Phi$;
  \item [$(2)$] for any $(k+1)$-subset $J=\{j_1,j_2,\ldots,j_{k+1}\}\subsetneq \{1,2,\ldots,n\}$, there exists some $k$-subset $\mathcal{I}\subsetneq J$ such that
$$1-\sum_{t\in\mathcal{L}}\eta_t\,\boldsymbol{e}_{t+1}^T A_{\mathcal{I},t}^{-1}\boldsymbol{c}^{(h)}_t\neq0.$$
\end{itemize}
\end{theorem}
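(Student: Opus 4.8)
The plan is to read off the two clauses of the theorem directly from the characterization of AMDS codes in Lemma~\ref{AMDS}, using the determinant identity already obtained in the proof of Lemma~\ref{ChMDS}. Recall from that computation that, for any $k$-subset $\mathcal{I}=\{i_1,\ldots,i_k\}\subsetneq\{1,\ldots,n\}$, the associated $k\times k$ minor of the generator matrix $G_h$ equals
$$
\Big(\prod_{s=1}^{k} v_{i_s}\Big)\,\Delta_{\mathcal{I}}\prod_{1\le m<j\le k}(\alpha_{i_j}-\alpha_{i_m}),
$$
where $\Delta_{\mathcal{I}}:=1-\sum_{t\in\mathcal{L}}\eta_t\,\boldsymbol{e}_{t+1}^T A_{\mathcal{I},t}^{-1}\boldsymbol{c}^{(h)}_t$. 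Since the $\alpha_i$ are pairwise distinct and the $v_i$ are nonzero, the two product factors never vanish, so this minor vanishes if and only if $\Delta_{\mathcal{I}}=0$. Thus every rank statement about columns of $G_h$ becomes a statement about the (non)vanishing of the scalars $\Delta_{\mathcal{I}}$, and the proof reduces to bookkeeping with these quantities.

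First I would handle clause (1) of Lemma~\ref{AMDS}. The existence of $k$ linearly dependent columns in $G_h$ is equivalent to the existence of a $k$-subset $\mathcal{I}$ with vanishing minor, that is, with $\Delta_{\mathcal{I}}=0$. By the definition of $\Phi$, the negation ``$\Delta_{\mathcal{I}}\ne0$ for every $k$-subset $\mathcal{I}$'' is precisely the membership $\boldsymbol{\eta}\in\Phi$, so the existence of such an $\mathcal{I}$ is exactly $\boldsymbol{\eta}\notin\Phi$, which is clause~(1) of the theorem. Here one must note that $\Phi$ lives in $\mathbb{F}_q^{l+1}\setminus\{0\}$, so the statement is to be read with $\boldsymbol{\eta}\ne0$: if $\boldsymbol{\eta}=0$ then every $\Delta_{\mathcal{I}}=1$ and the code is the (MDS) GRS code, so there are genuinely no $k$ linearly dependent columns.

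Next I would handle clause (2) of Lemma~\ref{AMDS}. Because $G_h$ has only $k$ rows, any $k+1$ of its columns span a subspace of dimension at most $k$; hence the $k\times(k+1)$ submatrix $G_J$ indexed by a $(k+1)$-subset $J$ has rank exactly $k$ if and only if it has full row rank, i.e.\ if and only if at least one of its $k\times k$ minors is nonzero. Each such minor is obtained by discarding a single column, so it is the minor indexed by some $k$-subset $\mathcal{I}\subsetneq J$, and by the identity above it is nonzero iff $\Delta_{\mathcal{I}}\ne0$. Therefore $G_J$ has rank $k$ iff $\Delta_{\mathcal{I}}\ne0$ for some $\mathcal{I}\subsetneq J$, and requiring this for every $(k+1)$-subset $J$ is exactly clause~(2) of the theorem.

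Finally I would assemble the equivalences: by Lemma~\ref{AMDS}, $\mathcal{C}_h$ is AMDS iff both clauses of that lemma hold, and the two steps above show these are equivalent to clauses~(1) and~(2) of the theorem. I do not expect a deep obstacle, since the determinant identity from Lemma~\ref{ChMDS} does all the real work. The one place that needs care is the rank argument for the $k\times(k+1)$ blocks, where the correct reading is ``\emph{some} deleted-column minor is nonzero'' rather than ``all are,'' together with the observation that clause~(1) implicitly forces $\boldsymbol{\eta}\ne0$ so as not to misclassify the GRS code.
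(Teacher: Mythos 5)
Your proposal is correct and follows essentially the same route as the paper: both reduce the two conditions of Lemma~\ref{AMDS} to the (non)vanishing of the $k\times k$ minors computed in Lemma~\ref{ChMDS}, with clause (2) handled by noting that the $k\times(k+1)$ block indexed by $J$ has rank $k$ exactly when some deleted-column minor is nonzero. Your added remark about the $\boldsymbol{\eta}=\boldsymbol{0}$ edge case (where $\boldsymbol{\eta}\notin\Phi$ by definition of $\Phi\subseteq\mathbb{F}_q^{l+1}\setminus\{0\}$, yet the code is GRS and hence MDS rather than AMDS) is a genuine point of care that the paper's proof passes over.
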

\begin{proof}
(1) By Lemma \ref{AMDS}, it suffices to show that there exist $k$ linearly dependent columns in $G_h$. By Lemma \ref{ChMDS}, for the $k\times k$ submatrix $G_\mathcal{I}$ of $G_h$ indexed by $\mathcal{I}$, $\det(G_\mathcal{I})\neq0$ if and only if $1-\sum_{t\in\mathcal L}\eta_t\,\boldsymbol{e}_{t+1}^T A_{\mathcal{I},t}^{-1}\boldsymbol{c}^{(h)}_t\neq0$.
Therefore, there exist $k$ linearly dependent columns in $G_h$ if and only if $\boldsymbol{\eta}\notin\Phi$.

(2) For the $(k+1)\times (k+1)$ submatrix $G_J$ consisted of any $k+1$ columns in $G_h$ given by
$$G_J=\left(
    \begin{array}{ccc}
      1 & \cdots & 1 \\
      \vdots & \ddots & \vdots \\
      \alpha_{i_1}^{h-1} & \cdots & \alpha_{i_{k+1}}^{h-1} \\
      \alpha_{i_1}^{h}+\sum_{t\in \mathcal{L}} \eta_{t}\,\alpha_{i_1}^{k+t} &\cdots & \alpha_{i_k}^{h}+\sum_{t\in \mathcal{L}} \eta_{t}\,\alpha_{i_{k+1}}^{k+t} \\
      \alpha_{i_1}^{h+1} & \cdots & \alpha_{i_{k+1}}^{h+1} \\
      \vdots & \ddots & \vdots \\
      \alpha_{i_1}^{k-1} & \cdots & \alpha_{i_{k+1}}^{k-1} \\
    \end{array}
  \right),$$
it is obvious that $\mathrm{rank}(M)\leq k$. Thus, $\mathrm{rank}(M)=k$ if and only if there exist some $k\times k$ nonsingular submatrix in $G_J$, that is, for any $(k+1)$-subset $J\subseteq \{1,2,\ldots,n\}$, there exist some $k$-subset $\mathcal{I}\subseteq J$ such that
$$1-\sum_{t\in\mathcal{L}}\eta_t\,\boldsymbol{e}_{t+1}^T A_{\mathcal{I},t}^{-1}\boldsymbol{c}^{(h)}_t\neq0.$$
This completes the proof.
\end{proof}

\section{LCD codes constructed from the code $\mathcal{C}_h$ }
In this section, we study two classes of LCD codes from the code $\mathcal{C}_h$ under special choices of evaluation points $\boldsymbol{\alpha}$ and $\boldsymbol{v}$, from which two classes of LCD MDS codes are further constructed. The following theorem presents the first class of LCD codes obtained from  $\mathcal{C}_h$.

\begin{theorem}\label{th1LCD}
Let $k$ and $n$ be two positive integers with $2\leq k \le \frac{n-2l-1}{2}$, $n \mid (q-1)$, and $\lambda \in \mathbb{F}_q^{\ast}$ with $\operatorname{ord}(\lambda)\mid \frac{q-1}{n}$. Let $\boldsymbol{\alpha}=(\alpha_1,\alpha_2,\ldots,\alpha_n)$ and $\boldsymbol{v}=(v_1,\ldots,v_n)$ satisfy that $\alpha_1,\ldots,\alpha_n$ be all roots of $m(x)=x^n-\lambda$, $v_i\in\{1,-1\}$ for $k\le i\le n$ and
$v_i\in\mathbb{F}_q\setminus\{-1,0,1\}$ for $1\le i\le k-1$, respectively. Set $r_{h-1}=\mathrm{Coeff}_{h-1}\bigl(q(x)\bmod g(x)\bigr)$, where $q(x)=x^{h-1}+\sum_{t\in\mathcal L}\eta_t x^{k+t-1}$ and $g(x)=\prod_{i=1}^{k-1}(x-\alpha_i)$.
If $r_{h-1}\neq 0$, then $\mathcal{C}_h$ is an LCD code.
\end{theorem}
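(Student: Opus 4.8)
The plan is to use the standard reformulation that a code $\mathcal{C}$ with generator matrix $G$ of full row rank is LCD if and only if its Gram matrix $GG^{T}$ is nonsingular: the map $\boldsymbol{x}\mapsto \boldsymbol{x}G$ carries $\operatorname{Hull}(\mathcal{C})$ isomorphically onto the left kernel of $GG^{T}$, so $\dim\operatorname{Hull}(\mathcal{C})=k-\operatorname{rank}(GG^{T})$ and LCD is equivalent to $\det(G_hG_h^{T})\neq 0$. Writing $\boldsymbol{g}(\alpha)=(1,\alpha,\ldots,\alpha^{h-1},\,\alpha^{h}+\sum_{t\in\mathcal{L}}\eta_t\alpha^{k+t},\,\alpha^{h+1},\ldots,\alpha^{k-1})^{T}$ for the twisted evaluation column, the $i$-th column of $G_h$ is $v_i\boldsymbol{g}(\alpha_i)$, hence $G_hG_h^{T}=\sum_{i=1}^{n}v_i^{2}\,\boldsymbol{g}(\alpha_i)\boldsymbol{g}(\alpha_i)^{T}$. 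Since $v_i^{2}=1$ for $k\le i\le n$ and $v_i^{2}=1+\varepsilon_i$ with $\varepsilon_i:=v_i^{2}-1\neq 0$ for $1\le i\le k-1$, I split this as $G_hG_h^{T}=A+B$, where $A=\sum_{i=1}^{n}\boldsymbol{g}(\alpha_i)\boldsymbol{g}(\alpha_i)^{T}$ is the unweighted Gram matrix and $B=\sum_{i=1}^{k-1}\varepsilon_i\,\boldsymbol{g}(\alpha_i)\boldsymbol{g}(\alpha_i)^{T}$.

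The crucial step is to show that $A$ collapses to the rank-one matrix $A=n\,\boldsymbol{e}_0\boldsymbol{e}_0^{T}$, where $\boldsymbol{e}_0=(1,0,\ldots,0)^{T}$. Every entry of $A$ is a sum of power sums $\sum_{i=1}^{n}\alpha_i^{m}$, where each exponent $m$ is a sum of two terms, each being either an ordinary exponent $\le k-1$ or a twisted exponent $k+t$ with $t\le l$; thus $0\le m\le 2k+2l$. Because $\alpha_1,\ldots,\alpha_n$ are exactly the roots of $x^{n}-\lambda$, one has $\sum_{i=1}^{n}\alpha_i^{m}=n\lambda^{m/n}$ when $n\mid m$ and $0$ otherwise. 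The hypothesis $k\le\frac{n-2l-1}{2}$ is precisely what yields $2k+2l\le n-1$, so the only exponent divisible by $n$ in the admissible range is $m=0$, which occurs solely in the $(0,0)$ entry (where both factors are $\alpha_i^{0}$). Verifying that the twisted row and column produce no exponent equal to $0$ or to a multiple of $n$ is the main technical obstacle, and it is exactly here that the bound on $k$ is indispensable. This gives $A=n\,\boldsymbol{e}_0\boldsymbol{e}_0^{T}$.

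Consequently $G_hG_h^{T}=n\,\boldsymbol{e}_0\boldsymbol{e}_0^{T}+\sum_{i=1}^{k-1}\varepsilon_i\,\boldsymbol{g}(\alpha_i)\boldsymbol{g}(\alpha_i)^{T}=VDV^{T}$, where $V=(\boldsymbol{e}_0,\boldsymbol{g}(\alpha_1),\ldots,\boldsymbol{g}(\alpha_{k-1}))$ is the $k\times k$ matrix with those columns and $D=\operatorname{diag}(n,\varepsilon_1,\ldots,\varepsilon_{k-1})$. Hence $\det(G_hG_h^{T})=\det(V)^{2}\,n\prod_{i=1}^{k-1}\varepsilon_i$. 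Since $n\mid q-1$ forces $p\nmid n$, so $n\neq 0$ in $\mathbb{F}_q$, and each $\varepsilon_i\neq 0$, the code $\mathcal{C}_h$ is LCD if and only if $\det(V)\neq 0$.

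Finally I would compute $\det(V)$. Expanding along the first column $\boldsymbol{e}_0$ deletes row $0$ and column $0$; factoring $\alpha_i$ out of each of the remaining columns (legitimate because every exponent appearing there is $\ge 1$) leaves a $(k-1)\times(k-1)$ matrix whose rows are the powers $\alpha_i^{0},\ldots,\alpha_i^{k-2}$ except that the row in position $h$ is $q(\alpha_i)=\alpha_i^{h-1}+\sum_{t\in\mathcal{L}}\eta_t\alpha_i^{k+t-1}$. As $\alpha_1,\ldots,\alpha_{k-1}$ are the roots of $g(x)$, we have $q(\alpha_i)=(q\bmod g)(\alpha_i)=\sum_{s=0}^{k-2}r_s\alpha_i^{s}$, so this exceptional row is a linear combination of the ordinary power rows. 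Subtracting $r_s$ times the power-$s$ row for each $s\neq h-1$ (which leaves the determinant unchanged) replaces it by $r_{h-1}$ times the power-$(h-1)$ row, giving $\det(V)=\bigl(\prod_{i=1}^{k-1}\alpha_i\bigr)\,r_{h-1}\prod_{1\le i<j\le k-1}(\alpha_j-\alpha_i)$. Since the Vandermonde factor and each $\alpha_i$ are nonzero, $\det(V)\neq 0$ is equivalent to $r_{h-1}\neq 0$; in particular $r_{h-1}\neq 0$ implies $\mathcal{C}_h$ is LCD, as claimed.
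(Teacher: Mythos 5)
Your proposal is correct, and it takes a genuinely different route from the paper. The paper first derives an explicit parity check matrix $H_h$ (its Theorem 1, itself a nontrivial computation), then applies the criterion $\operatorname{rank}\begin{pmatrix}G_h\\ H_h\end{pmatrix}=n$, reduces the stacked matrix to block-triangular form, and evaluates two determinants: a $(k-1)\times(k-1)$ Vandermonde-with-modified-row block giving $r_{h-1}$, and an $(n-k+1)\times(n-k+1)$ block handled by a rank-one-update determinant formula together with a coefficient identity showing that this second determinant also equals $r_{h-1}$ (up to nonzero factors). You instead invoke Massey's criterion $\det(G_hG_h^{T})\neq 0$, which bypasses $H_h$ entirely: the hypothesis $2k+2l\le n-1$ enters through the vanishing of the power sums $\sum_i\alpha_i^{m}$ for $1\le m\le 2(k+l)$ (rather than through the block structure of the stacked matrix), collapsing the unweighted part of the Gram matrix to $n\,\boldsymbol{e}_0\boldsymbol{e}_0^{T}$ and yielding the clean factorization $G_hG_h^{T}=VDV^{T}$. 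The final determinant you compute is essentially the same Vandermonde-with-modified-row calculation the paper does for its block $A$, so the two arguments share their arithmetic core ($q\bmod g$ and the power sums of roots of $x^{n}-\lambda$), but yours is considerably shorter, needs only one determinant instead of two, and as a bonus delivers an if-and-only-if: under these choices of $\boldsymbol{\alpha}$ and $\boldsymbol{v}$, the code $\mathcal{C}_h$ is LCD precisely when $r_{h-1}\neq 0$. Two small points to make explicit if you write this up: the full row rank of $G_h$ (needed to apply the Gram criterion, and guaranteed since $\deg f\le k+l\le n-1$), and the sign/position bookkeeping in the cofactor expansion of $\det(V)$, both of which are routine.
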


\begin{proof}

With the above choices of $\boldsymbol{\alpha}$ and $\boldsymbol{v}$, the code $\mathcal{C}_h$ is a TGRS code. This is because $m'(x)=n x^{n-1}$ and $\gcd(m(x),m'(x))=1$ imply that the roots $\alpha_1,\alpha_2,\ldots,\alpha_n$ of $m(x)=x^n-\lambda$ are pairwise distinct in the splitting field over $\mathbb{F}_q$, and $\operatorname{ord}(\lambda)\mid \frac{q-1}{n}$ further yields $\boldsymbol{\alpha}\in\mathbb{F}_q^n$.

Let $G_h$ be the generator matrix of $\mathcal{C}_h$ given by Eq.\eqref{G}. Then according to Threorem \ref{th1}, $H_h$ in Eq.\eqref{H} is a parity check matrix of $\mathcal{C}_h$. By Lemma~\ref{LCD}, to prove that $\mathcal{C}_h$ is LCD, it suffices to show that $\operatorname{rank}\!\begin{pmatrix} G_h\\ H_h\end{pmatrix}=n$. To this end, we set $D=\operatorname{diag}\{v_1,v_2,\dots,v_{n}\}\begin{pmatrix} G_h\\ H_h\end{pmatrix}$ and prove that $\operatorname{rank}(D)=n$.

From the definitions of $u_i$ and $\alpha_i$ for $1 \le i \le n$, we have $u_i=\prod_{j\ne i}(\alpha_i-\alpha_j)^{-1}=\frac{1}{m'(\alpha_i)}=\frac{1}{n\alpha_i^{n-1}}=\frac{\alpha_i}{n\lambda}$ since $\alpha_i^n=\lambda$. Then by substituting this expression for $u_i$ into $D$, we obtain
\begingroup
\footnotesize
\setlength{\arraycolsep}{3pt}
\renewcommand{\arraystretch}{1.1}
\[
D=\left(
\begin{array}{ccc:ccc}
v_1^2 & \cdots & v_{k-1}^2 & 1 & \cdots & 1\\
\hdashline
v_1^2\alpha_1 & \cdots & v_{k-1}\alpha_{k-1}^2 & \alpha_k & \cdots & \alpha_k\\
\vdots & \ddots & \vdots & \vdots & \ddots & \vdots\\
v_1^2\alpha_1^{h-1} & \cdots & v_{k-1}^2\alpha_{k-1}^{h-1} & \alpha_k^{h-1} & \cdots & \alpha_k^{h-1}\\
v_1^2\!\big(\alpha_1^{h}\!+\!\sum\limits_{t\in\mathcal L}\eta_t\alpha_1^{k+t}\big) & \cdots &
v_{k-1}^2\!\big(\alpha_{k-1}^{h}\!+\!\sum\limits_{t\in\mathcal L}\eta_t\alpha_{k-1}^{k+t}\big) &
\alpha_k^{h}\!+\!\sum\limits_{t\in\mathcal L}\eta_t\alpha_k^{k+t} & \cdots &
\alpha_n^{h}\!+\!\sum\limits_{t\in\mathcal L}\eta_t\alpha_n^{k+t}\\[0.5em]
v_1^2\alpha_1^{h+1} & \cdots & v_{k-1}^2\alpha_{k-1}^{h+1} & \alpha_k^{h+1} & \cdots & \alpha_n^{h+1}\\
\vdots & \ddots & \vdots & \vdots & \ddots & \vdots\\
v_1^2\alpha_1^{k-1} & \cdots & v_{k-1}^2\alpha_{k-1}^{k-1} & \alpha_k^{k-1} & \cdots & \alpha_n^{k-1}\\
\hdashline
\frac{1}{n\lambda}\alpha_1 & \cdots & \frac{1}{n\lambda}\alpha_{k-1} & \frac{1}{n\lambda}\alpha_k & \cdots & \frac{1}{n\lambda}\alpha_n\\
\vdots & \ddots & \vdots & \vdots & \ddots & \vdots\\
\frac{1}{n\lambda}\alpha_1^{\,n-k-(l+1)} & \cdots &
\frac{1}{n\lambda}\alpha_{k-1}^{\,n-k-(l+1)} &
\frac{1}{n\lambda}\alpha_k^{\,n-k-(l+1)} & \cdots &
\frac{1}{n\lambda}\alpha_n^{\,n-k-(l+1)}\\[2mm]
\frac{\big(\alpha_1^{\,n-k-l}-\widetilde w_1\alpha_1\Psi_l\big)}{n\lambda} & \cdots &
\frac{\big(\alpha_{k-1}^{\,n-k-l}-\widetilde w_{k-1}\alpha_{k-1}\Psi_l\big)}{n\lambda} &
\frac{\big(\alpha_k^{\,n-k-l}-\widetilde w_k\alpha_k\Psi_l\big)}{n\lambda} & \cdots &
\frac{\big(\alpha_n^{\,n-k-l}-\widetilde w_n\alpha_n\Psi_l\big)}{n\lambda}\\
\vdots & \ddots & \vdots & \vdots & \ddots & \vdots\\
\frac{\big(\alpha_1^{\,n-k}-\widetilde w_1\alpha_1\Psi_0\big)}{n\lambda} & \cdots &
\frac{\big(\alpha_{k-1}^{\,n-k}-\widetilde w_{k-1}\alpha_{k-1}\Psi_0\big)}{n\lambda} &
\frac{\big(\alpha_k^{\,n-k}-\widetilde w_k\alpha_k\Psi_0\big)}{n\lambda} & \cdots &
\frac{\big(\alpha_n^{\,n-k}-\widetilde w_n\alpha_n\Psi_0\big)}{n\lambda}
\end{array}
\right),
\]
\endgroup
where $\Psi_s:=\sum_{t=s}^{l}\eta_t\,S_{t-s}$ for $0\le s\le l$.

If $2\leq k \le \frac{n-2l-1}{2}$, i.e., $k+l\leq n-k-(l+1)$, then by applying elementary row operations to the matrix $D$, we obtain the following matrix
$$D_1=\begin{pmatrix}
A & \boldsymbol{O}\\
B & C
\end{pmatrix},$$
where
$$\small
A=\begin{pmatrix}
(v_1^2-1)\alpha_1 & \cdots & (v_{k-1}^2-1)\alpha_{k-1} \\
\vdots & \ddots & \vdots \\
(v_1^2-1)\big(\alpha_1^{h}+\sum\limits_{t\in\mathcal L}\eta_t\alpha_1^{k+t}\big) & \cdots &
(v_{k-1}^2-1)\big(\alpha_{k-1}^{h}+\sum\limits_{t\in\mathcal L}\eta_t\alpha_{k-1}^{k+t}\big) \\
\vdots & \ddots & \vdots  \\
(v_1^2-1)\alpha_1^{k-1} & \cdots & (v_{k-1}^2-1)\alpha_{k-1}^{k-1} \\
\end{pmatrix}_{(k-1)\times (k-1)},$$
and
$$\small C=
\begin{pmatrix}
1  & \cdots & 1 \\
\dfrac{1}{n\lambda} \alpha_k  & \cdots & \dfrac{1}{n\lambda} \alpha_n\\
\vdots  & \ddots & \vdots\\
\dfrac{1}{n\lambda} \alpha_k^{n-k-(l+1)}  &
\cdots &
\dfrac{1}{n\lambda} \alpha_n^{n-k-(l+1)}\\[2mm]
\dfrac{1}{n\lambda} \!\Big(\alpha_k^{n-k-l}-\widetilde{w}_k\alpha_k\Psi_l\Big)  &
\cdots &
\dfrac{1}{n\lambda}\!\Big(\alpha_n^{n-k-l}- \widetilde{w}_n\alpha_n\Psi_l\Big)\\
\vdots  & \ddots & \vdots\\
\dfrac{1}{n\lambda}\!\Big(\alpha_k^{n-k}-\widetilde{w}_k\alpha_k\Psi_0\Big) &
\cdots &
\dfrac{1}{n\lambda}\!\Big(\alpha_n^{n-k}-\widetilde{w}_n\alpha_n\Psi_0\Big)
\end{pmatrix}_{(n-k+1)\times (n-k+1)}.$$
Therefore, to show that $D_1$ is invertible, it remains to prove that $\operatorname{rank}(A)=k-1$ and $\operatorname{rank}(C)=n-k+1$. Next, we divide the proof into the following two steps.

\textbf{Step 1.} We prove that $\operatorname{rank}(A)=k-1$.

Note that the matrix $A$ can be decomposed as $A=D_v\cdot D_\alpha\cdot \widetilde{A}$, where $D_v=\operatorname{diag}\{v_1^2-1,\dots,v_{k-1}^2-1\}$, $D_\alpha=\operatorname{diag}\{\alpha_1,\dots,\alpha_{k-1}\}$ and $\widetilde{A}$ is the Vandermonde matrix $(\alpha_j^{i-1})_{1\le i,j\le {k-1}}$ with its $h$-th row being replaced by
$(q(\alpha_1),\dots,q(\alpha_{k-1}))$. It is clear that $\det(D_v)\neq0$ and $\det(D_\alpha)\neq0$ since $(v_i^2-1)\alpha_i\neq 0$ for $1\le i\le k-1$.

Now, we examine the value of $\det(\widetilde{A})$. From the definitions of $g(x)$ and $q(x)$, performing the polynomial division of $q(x)$ by $g(x)$ yields
\begin{align}\label{q(x)}
q(x)=b(x)g(x)+r(x),\quad \deg(r(x))\le k-2, ~~r(x)=\sum_{i=0}^{k-2}r_{i}x^{i},
\end{align}
which leads to $q(\alpha_j)=r(\alpha_j)=\sum_{i=0}^{k-2}r_i\,\alpha_j^{\,i}$ for $1\le j\le k-1$ since $g(\alpha_j)=0$. Hence,
\[
\det(\widetilde{A})=r_{h-1}\prod_{1\le i<j\le k-1}(\alpha_j-\alpha_i).
\]
If $r_{h-1}\neq 0$, then
\[
\det(A)=\det(D_v)\det(D_\alpha)\det(\widetilde{A})\neq 0,
\]
that is, $\operatorname{rank}(A)=k-1$.

\textbf{Step 2.} We show that $\operatorname{rank}(C)=n-k+1$.

Since $\alpha_1,\alpha_2,\ldots,\alpha_n$ are pairwise distinct roots of $x^n-\lambda$ in $\mathbb{F}_q$, we have
\begin{align}\label{eq1}
\prod_{j\ne i}(x-\alpha_j)=\frac{x^n-\lambda}{x-\alpha_i}=\frac{x^n-\alpha_i^n}{x-\alpha_i}
=x^{n-1}+\alpha_i x^{n-2}+\cdots+\alpha_i^{n-1}.
\end{align}
Meanwhile, we have
\[
\prod_{j\ne i}(x-\alpha_j)=\sum_{t=0}^{n-1}(-1)^{n-1-t}\sigma_{n-1-t}(i)x^t.
\]
Comparing the coefficients $x^r$ in Eq.\eqref{eq1} we obtain $\sigma_r(i)=(-1)^r\alpha_i^r$ for all $0\le r\le n-1$. Consequently, $\widetilde w_i=(-1)^{n+h+1}\sigma_{n-1-h}(i)=\alpha_i^{n-1-h}$, that is, $\alpha_i\widetilde w_i=\alpha_i^{n-h}$.

Therefore, substituting $\alpha_i\widetilde w_i=\alpha_i^{n-h}$ into the matrix $C$,we obtain the following matrix decomposition
\[
C=\widehat{C}\cdot \operatorname{diag}\{1,\underbrace{\tfrac1{n\lambda},\ldots,\tfrac1{n\lambda}}_{n-k\text{ times}}\},
\]
where
\[
\small
\widehat{C}=
\begin{pmatrix}
1  & \cdots & 1 \\
\alpha_k  & \cdots &  \alpha_n\\
\vdots  & \ddots & \vdots\\
\alpha_k^{n-k-l-1}  &\cdots &\alpha_n^{n-k-l-1}\\
\alpha_k^{n-k-l}-\alpha_k^{n-h}\Psi_l & \cdots & \alpha_n^{n-k-l}-\alpha_n^{n-h}\Psi_l\\
\vdots  & \ddots & \vdots\\
\alpha_k^{n-k}-\alpha_k^{n-h}\Psi_0 & \cdots & \alpha_n^{n-k}-\alpha_n^{n-h}\Psi_0
\end{pmatrix}.
\]

Then, we present a matrix decomposition of $\widehat{C}$. If we put $f(x)=\prod_{i=k}^{n}(x-\alpha_i)$, i.e., $f(x)=\frac{m'(x)}{g(x)}$. By the Euclidean division algorithm, there exist unique polynomials
$t(x),\rho(x)\in \mathbb{F}_q[x]$ such that
\begin{align}\label{x_n-h}
x^{n-h}=t(x)f(x)+\rho(x),
\end{align}
where $\rho(x)=\sum_{j=0}^{n-k}\rho_jx^j$ with $\rho_j\in \mathbb{F}_q$.
Evaluating at $x=\alpha_i$ in Eq.\eqref{x_n-h} yields
\[
\alpha_i^{n-h}=\rho(\alpha_i)=\sum_{j=0}^{n-k}\rho_j\alpha_i^j \qquad (k\le i\le n),
\]
that is, $(\alpha_k^{n-h},\alpha_{k+1}^{n-h},\ldots,\alpha_{n}^{n-h})=\boldsymbol{\rho}L$, where $\boldsymbol{\rho}=(\rho_0,\ldots,\rho_{n-k})$ and  $L=\bigl(\alpha_{k+j-1}^{\,i-1}\bigr)_{1\le i,j\le n-k+1}$ be the $(n-k+1)\times (n-k+1)$ Vandermonde matrix. Clearly, $\det(L)\neq 0$ since $\alpha_k,\alpha_{k+1},\ldots,\alpha_{n}$ are pairwise distinct.
Then we have the following matrix factorization
$$\widehat{C}=ML,$$
where $M=I_{n-k+1}-\boldsymbol{u}^{T}\boldsymbol{\rho}$ and $\boldsymbol{u}=(\underbrace{0,\ldots,0}_{n-k-l},\Psi_l,\Psi_{l-1},\ldots,\Psi_0)$.

By Lemma~\ref{l6}, we have
\[
\det(M)=\det\bigl(I_{n-k+1}-\boldsymbol{u}^{\mathsf T}\boldsymbol{\rho}\bigr)
=1-\boldsymbol{\rho}\boldsymbol{u}^{\mathsf T}
=1-\sum_{t=0}^{l}\Psi_t\, \rho_{\,n-k-t}.
\]

Now, we claim that  $r_{h-1}=1-\sum_{t=0}^{l}\Psi_t\, \rho_{\,n-k-t}$, and we prove this claim below.

Since $\alpha_1,\alpha_2,\ldots,\alpha_n$ are pairwise distinct roots of $x^n-\lambda$, we have
\[
\prod_{i=1}^n (1-\alpha_i z)=1-\lambda z^n,
\]
So, it follows from Definition \ref{matrix} that
$$
\sum_{t\ge 0} S_t(\alpha_1,\ldots,\alpha_n)\,z^t=\prod_{i=1}^n \frac{1}{1-\alpha_i z}=\frac{1}{1-\lambda z^n}
=\sum_{j\ge 0}\lambda^j z^{nj},
$$
which implies $S_0=1$, $S_t=0$ for $1\le t< n$ and $S_{nj}=\lambda^j$. Note that $l\leq n-k-1<n$. Hence,  for $0\le s\le l$, we get \begin{align}\label{eq9}
\Psi_s=\sum_{t=s}^{l}\eta_t\,S_{t-s}=\eta_sS_0=\eta_s, \end{align}
and thus
\begin{align}\label{eq3}
1-\sum_{t=0}^l \eta_t\,s_{n-k-t}=1-\sum_{t=0}^l \Psi_t\,s_{n-k-t}.
\end{align}

Furthermore, multiplying both sides of Eq.\eqref{q(x)} by $(x^{n-h}-\rho(x))$, we obtain
\begin{align}\label{eq2}
b(x)g(x)\bigl(x^{n-h}-\rho(x)\bigr)+r(x)\bigl(x^{n-h}-\rho(x)\bigr)=q(x)(x^{n-h}-\rho(x)\bigr).
\end{align}
Then by Eq.\eqref{x_n-h}, we have $b(x)g(x)(x^{n-h}-\rho(x))=(x^n-\lambda)b(x)t(x)$ since $m(x)=g(x)f(x)=x^n-\lambda$, where $\deg (b(x))\leq(k+l-1)-(k-1)=l$ and $\deg (t(x))\le (n-h)-(n-k+1)=k-1-h$. Recall that $h\geq1$ and $l\leq n-k-1$, then it follows from $\deg(b(x)t(x))\le k-1-h+l\le n-3$ that
\[
\mathrm{Coeff}_{n-1}\bigl(b(x)g(x)(x^{n-h}-\rho(x))\bigr)=0.
\]
Consequently,
\begin{align}\label{eq4}
\mathrm{Coeff}_{n-1}\bigl(r(x)(x^{n-h}-\rho(x))\bigr)=\mathrm{Coeff}_{n-1}\bigl(q(x)(x^{n-h}-\rho(x))\bigr)
\end{align}
holds in Eq.\eqref{eq2}. It is clear that the left-hand side of Eq.\eqref{eq4} satisfies
\begin{align}\label{eq5}
\mathrm{Coeff}_{n-1}\bigl(r(x)(x^{n-h}-\rho(x))\bigr)=\mathrm{Coeff}_{n-1}\bigl(r(x)x^{n-h}\bigr)
=\mathrm{Coeff}_{h-1}(r(x))=r_{h-1}
\end{align}
since $\deg(r(x)\rho(x))\le (k-2)+(n-k)=n-2$. Meanwhile, substituting $q(x)=x^{h-1}+\sum_{t=0}^l \eta_t x^{k+t-1}$ into the right-hand side of Eq.\eqref{eq4} gives
\begin{align}\label{eq6}
\nonumber&\mathrm{Coeff}_{n-1}\Bigl(\Bigl(x^{h-1}+\sum_{t=0}^{l}\eta_t x^{k+t-1}\Bigr)\bigl(x^{n-h}-\rho(x)\bigr)\Bigr) \\
=&\mathrm{Coeff}_{n-1}\bigl(x^{h-1}x^{n-h}\bigr)-\sum_{t=0}^{l}\eta_t \mathrm{Coeff}_{n-1}\bigl(x^{k+t-1}\rho(x)\bigr) \\ \nonumber
=&1-\sum_{t=0}^{l}\eta_t\,\rho_{\,n-k-t},\nonumber
\end{align}
where the first equality uses the facts that $k+t-1+n-h\ge n+1$ for each $0\le t\le l$ and $\deg\bigl(x^{h-1}\rho(x)\bigr)\le (h-1)+(n-k)\le (k-2)+(n-k)=n-2$, and the second equality follows from $\rho(x)=\sum_{j=0}^{n-k}\rho_j x^j$.

Therefore, from Eqs.\eqref{eq3}, \eqref{eq4}-\eqref{eq6}, we have
\[
r_{h-1}=1-\sum_{t=0}^l \eta_t\,\rho_{n-k-t}=1-\sum_{t=0}^l \Psi_t\,\rho_{n-k-t},
\]
which implies that $1-\sum_{t=0}^l \Psi_t\,\rho_{n-k-t}\neq 0$ since $r_{h-1}\neq 0$.
Hence,
\[
\det(\widehat{C})=\det(M)\det(L)\neq 0.
\]
Therefore, $\operatorname{rank}(C)=n-k+1$.

In conclusion, $\operatorname{rank}(D)=\operatorname{rank}(D_1)=n$, and thus
$\mathcal{C}_h$ is an LCD code. This completes the proof.
\end{proof}

Combining Lemma~\ref{ChMDS} with Theorem~\ref{th1LCD}, we immediately obtain the following sufficient condition for the code $\mathcal{C}_h$ to be an LCD MDS code.

\begin{corollary}\label{coroLCDMDS}
Let the notion and symbols be defined
as in Lemma~\ref{ChMDS} and Theorem~\ref{th1LCD}. If $r_{h-1}\neq 0$ and $\boldsymbol{\eta}\in\Phi$,
then $\mathcal{C}_h$ is an LCD MDS code.
\end{corollary}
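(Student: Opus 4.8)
The plan is to derive the corollary as a direct conjunction of the two preceding results, since the hypotheses $r_{h-1}\neq 0$ and $\boldsymbol{\eta}\in\Phi$ are precisely the two conditions that drive Theorem~\ref{th1LCD} and Lemma~\ref{ChMDS}, respectively. First I would confirm that the specialized setup of Theorem~\ref{th1LCD}---where $\boldsymbol{\alpha}$ ranges over the $n$ distinct roots of $m(x)=x^n-\lambda$ lying in $\mathbb{F}_q$---satisfies the standing hypothesis of Lemma~\ref{ChMDS} that $\alpha_1,\ldots,\alpha_n$ be pairwise distinct. This was already verified at the opening of the proof of Theorem~\ref{th1LCD} via $\gcd(m(x),m'(x))=1$ together with $\operatorname{ord}(\lambda)\mid\frac{q-1}{n}$, so Lemma~\ref{ChMDS} is directly applicable to this $\mathcal{C}_h$.

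Once compatibility is in place, the argument is a two-line combination. From $r_{h-1}\neq 0$ and the numerical constraints $2\le k\le \frac{n-2l-1}{2}$ imposed in Theorem~\ref{th1LCD}, that theorem gives that $\mathcal{C}_h$ is an LCD code. Independently, from $\boldsymbol{\eta}\in\Phi$, Lemma~\ref{ChMDS} gives that $\mathcal{C}_h$ is an MDS code. A code that is simultaneously LCD and MDS is by definition an LCD MDS code, which is exactly the claim.

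The proof carries essentially no technical obstacle, as all the real work resides in Lemma~\ref{ChMDS} and Theorem~\ref{th1LCD}. The only point I would pause on is the mutual consistency of the two hypotheses: one must ensure that a choice of $\boldsymbol{\eta}$ forcing $\boldsymbol{\eta}\in\Phi$ does not automatically force $r_{h-1}=0$. Because $r_{h-1}=\mathrm{Coeff}_{h-1}\bigl(q(x)\bmod g(x)\bigr)$ is a single scalar tied to the fixed factor $g(x)=\prod_{i=1}^{k-1}(x-\alpha_i)$, whereas membership in $\Phi$ is a nonvanishing condition ranging over every $k$-subset $\mathcal{I}\subsetneq\{1,\ldots,n\}$, the two impose genuinely independent constraints on $\boldsymbol{\eta}$, and the examples exhibited later confirm that both can hold simultaneously. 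Hence the corollary follows immediately.
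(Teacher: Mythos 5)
Your proposal is correct and matches the paper exactly: the paper gives no separate proof, simply stating that the corollary follows immediately by combining Lemma~\ref{ChMDS} (MDS from $\boldsymbol{\eta}\in\Phi$) with Theorem~\ref{th1LCD} (LCD from $r_{h-1}\neq 0$). Your additional remarks on the compatibility of the two hypotheses are a reasonable sanity check but not needed, since the corollary only asserts a sufficient condition and does not claim the hypotheses are simultaneously satisfiable.
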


The following two examples obtained from  MAGMA program are presented to illustrate the validity of Theorem \ref{th1LCD} and Corollary \ref{coroLCDMDS}.

\begin{example}
Let $(q,n,k,h,l,\lambda)=(31,15,4,1,3,1)$ and $\mathcal{L}=\{0,1,2,3\}$.
Let $\boldsymbol{\alpha}=(\alpha_1,\alpha_2,\ldots,\alpha_{15})
=(2,20,25,1,4,5,7,8,9,10,14,16,18,19,28)$, where $\alpha_1,\ldots,\alpha_{15}$ be all roots of $x^{15}-1$. Choose $\boldsymbol v=(18,23,5,1,1,1,-1,1,-1,-1,1,-1,1,-1,1)$ and $\boldsymbol{\eta}=(\eta_0,\eta_1,\eta_2,\eta_3)=(5,21,12,14)$. A computation gives
\[
r_{h-1}=\mathrm{Coeff}_{h-1}\bigl(q(x)\bmod g(x)\bigr)=14\neq 0,
\]
Thus the code $\mathcal{C}_h$ generated
\[
G_h=\left(\begin{array}{ccccccccccccccc}
18&23&5&1&1&1&-1&1&-1&-1&1&-1&1&-1&1\\
8&10&16&22&27&17&-2&18&-2&1&20&21&12&17&27\\
10&24&25&1&16&25&13&2&12&24&10&23&14&11&9\\
20&15&5&1&2&1&-2&16&15&23&16&27&4&23&4
    \end{array}
  \right)
\]
is a $[15,4]$ LCD code. This result is consistent with the result of Theorem~\ref{th1LCD}.
\end{example}

\begin{example}
Let $(q,n,k,h,l,\lambda)=(37,9,3,1,1,1)$ and $\mathcal{L}=\{0,1\}$.
Let $\boldsymbol{\alpha}=(\alpha_1,\alpha_2,\ldots,\alpha_9)=(1,16,26,12,33,10,34,7,9)$, where $\alpha_1,\ldots,\alpha_{9}$ be all roots of $x^{9}-1$ in $\mathbb{F}_{37}$. Choose $\boldsymbol{v}=(21,30,1,1,-1,1,1,1,-1)$ and $\boldsymbol{\eta}=(\eta_0,\eta_1)=(22,24)$. Then
\[
r_{h-1}=\mathrm{Coeff}_{h-1}\bigl(q(x)\bmod g(x)\bigr)=3\neq 0.
\]
Thus the code $\mathcal{C}_h$ generated by
\[
G_h=\left(\begin{array}{ccccccccc}
21&30&1&1&-1&1&1&1&-1\\
25&33&6&6&4&13&15&20&19\\
21&21&10&33&21&26&9&12&30
\end{array}\right)
\]
is a $[9,3,7]$ LCD MDS code. This result is consistent with the result of Corollary \ref{coroLCDMDS}.
\end{example}

Now,we provide the second class of LCD codes obtained from  $\mathcal{C}_h$.

\begin{theorem}\label{th2LCD}
Let the notion and symbols be defined as in Theorem~\ref{th1LCD}, while $k = \frac{n-l-m}{2}\geq 2\,(0\leq m\leq l)$. If $r_{h-1}\neq 0$ and $\sum\limits_{t=m}^{l}\eta_t\eta_{l+m-t}=0$, then $\mathcal{C}_h$ is an LCD code.
\end{theorem}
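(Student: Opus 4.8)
The plan is to follow the same two-block strategy used in the proof of Theorem~\ref{th1LCD}, since the hypotheses here are identical except that the length parameter is specialized to $k=\frac{n-l-m}{2}$ (equivalently $n-2k=l+m$) for some $0\le m\le l$, together with the extra algebraic constraint $\sum_{t=m}^{l}\eta_t\eta_{l+m-t}=0$. As before, I would set $D=\operatorname{diag}\{v_1,\dots,v_n\}\begin{pmatrix}G_h\\ H_h\end{pmatrix}$ and aim to show $\operatorname{rank}(D)=n$ via Lemma~\ref{LCD}. The choice $v_i\in\{1,-1\}$ for $k\le i\le n$ and $v_i\notin\{-1,0,1\}$ for $1\le i\le k-1$ again forces the first $k-1$ columns (after subtracting the corresponding $H_h$-rows from the $G_h$-rows) to decouple into a block $A$, exactly as in Step~1, so that $\det(A)=\det(D_v)\det(D_\alpha)\det(\widetilde A)=r_{h-1}\det(D_v)\det(D_\alpha)\prod_{i<j}(\alpha_j-\alpha_i)\neq 0$ whenever $r_{h-1}\neq 0$. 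That part transfers verbatim.

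The essential difference is in the right-hand block $C$. When $k=\frac{n-2l-1}{2}$ was replaced by the \emph{boundary} case $n-2k=l+m$, the degrees now force the lower Vandermonde-type rows of $H_h$ to overlap the dimension of $C$ differently: $C$ is again $(n-k+1)\times(n-k+1)$, but now the ``extra'' rows coming from the $-\widetilde w_i\alpha_i\Psi_s$ corrections interact with each other rather than sitting cleanly below an untouched Vandermonde block. First I would again substitute $\alpha_i\widetilde w_i=\alpha_i^{\,n-h}$ (which holds for \emph{all} $1\le i\le n$ by the same computation $\sigma_r(i)=(-1)^r\alpha_i^r$ from Eq.\eqref{eq1}) and pull out the diagonal factors to reduce to a matrix $\widehat C$ whose rows are $\alpha_i^{\,s}$ for the top part and $\alpha_i^{\,n-k-s}-\alpha_i^{\,n-h}\Psi_s$ for the bottom part. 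Then I would invoke the Euclidean-division identity $x^{n-h}=t(x)f(x)+\rho(x)$ to write $\alpha_i^{\,n-h}=\sum_j\rho_j\alpha_i^{\,j}$, factoring $\widehat C=ML$ where $L$ is Vandermonde and $M=I-\boldsymbol u^{T}\boldsymbol\rho$ is a rank-one perturbation of the identity. The crucial new point is that in the boundary regime the perturbation vector $\boldsymbol u=(0,\dots,0,\Psi_l,\dots,\Psi_0)$ and the coefficient vector $\boldsymbol\rho$ \emph{overlap in two places}, not one, so $\det(M)=1-\boldsymbol\rho\boldsymbol u^{T}$ will pick up a second contribution.

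The main obstacle, and the step I expect to require genuine work, is showing that this second contribution is killed precisely by the hypothesis $\sum_{t=m}^{l}\eta_t\eta_{l+m-t}=0$. Concretely, I would compute $\boldsymbol\rho\boldsymbol u^{T}=\sum_{t=0}^{l}\Psi_t\,\rho_{n-k-t}$ and then re-derive, by the same coefficient-matching trick used for Eqs.\eqref{eq2}--\eqref{eq6}, the identity relating $\sum_t\Psi_t\rho_{n-k-t}$ to $r_{h-1}$. In the boundary case the degree bound $\deg(b(x)t(x))\le k-1-h+l$ no longer stays below $n-1$ for \emph{every} term, so the $\mathrm{Coeff}_{n-1}$ of $b(x)g(x)(x^{n-h}-\rho(x))=(x^n-\lambda)b(x)t(x)$ need not vanish; instead it contributes a quadratic expression in the $\eta_t$'s. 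Using $\Psi_s=\eta_s$ (from Eq.\eqref{eq9}, valid since $l<n$) this extra term should be exactly $-\lambda\sum_{t=m}^{l}\eta_t\eta_{l+m-t}$ up to a nonzero scalar, so the vanishing condition $\sum_{t=m}^{l}\eta_t\eta_{l+m-t}=0$ guarantees $\det(M)=r_{h-1}\neq 0$. Hence $\det(\widehat C)=\det(M)\det(L)\neq 0$, giving $\operatorname{rank}(C)=n-k+1$, and combining with Step~1 yields $\operatorname{rank}(D)=n$, so that $\mathcal{C}_h$ is LCD. I would organize the write-up by reusing Step~1 wholesale and concentrating all new effort on the careful $\mathrm{Coeff}_{n-1}$ bookkeeping in Step~2, where the overlap of $\boldsymbol u$ and $\boldsymbol\rho$ produces the quadratic convolution $\sum_{t=m}^{l}\eta_t\eta_{l+m-t}$.
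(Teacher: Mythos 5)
Your proposal locates the role of the hypothesis $\sum_{t=m}^{l}\eta_t\eta_{l+m-t}=0$ in the wrong step, and the mechanism you describe for Step 2 does not actually occur. The identity $\det(M)=1-\sum_{t=0}^{l}\Psi_t\rho_{\,n-k-t}=r_{h-1}$ from Theorem \ref{th1LCD} holds verbatim in the boundary case: the only degree estimate it uses is $\deg(b(x)t(x))\le l+(k-1-h)\le (n-k-1)+(k-2)=n-3$, which depends only on $h\ge 1$ and $l\le n-k-1$, not on any relation between $k$ and $n$. Hence $\mathrm{Coeff}_{n-1}\bigl(b(x)g(x)(x^{n-h}-\rho(x))\bigr)$ still vanishes, no quadratic term $-\lambda\sum_{t}\eta_t\eta_{l+m-t}$ enters $\det(M)$, and the claimed ``overlap in two places'' between $\boldsymbol{u}$ and $\boldsymbol{\rho}$ has no content: $\boldsymbol{\rho}\boldsymbol{u}^{T}$ is the same scalar $\sum_{t=0}^{l}\Psi_t\rho_{\,n-k-t}$ as before. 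If you carried out the $\mathrm{Coeff}_{n-1}$ bookkeeping you propose, it would simply fail to produce the extra contribution your argument needs.

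The hypothesis is in fact consumed by the block-triangularization, which you assert ``transfers verbatim'' --- it does not. The twisted generator row contains the powers $\alpha_i^{k+t}$, $0\le t\le l$, and in the boundary case $n-k-l=k+m$ the powers $k+t$ with $t\ge m$ are no longer among the untouched rows of the bottom block of $D$ (those carrying $\tfrac{1}{n\lambda}\alpha_i^{j}$ with $j\le n-k-l-1$); they can only be cleared using the corrected rows $\tfrac{1}{n\lambda}\bigl(\alpha_i^{\,n-k-s}-\widetilde w_i\alpha_i\Psi_s\bigr)$ with $s=l+m-t$. Each such subtraction reintroduces $\eta_t\widetilde w_i\alpha_i\Psi_{l+m-t}=\eta_t\alpha_i^{\,n-h}\Psi_{l+m-t}$ into all $n$ columns of that row, so the last $n-k+1$ entries of the twisted row are left equal to $\alpha_i^{\,n-h}\sum_{t=m}^{l}\eta_t\Psi_{l+m-t}$ rather than $0$, and the matrix is not block lower triangular. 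Since $\Psi_s=\eta_s$ by Eq.\eqref{eq9}, this residue equals $\alpha_i^{\,n-h}\sum_{t=m}^{l}\eta_t\eta_{l+m-t}$, and the hypothesis is exactly what kills it; only then do the blocks $\widehat A$ and $C$ coincide with those of Theorem \ref{th1LCD}, after which your (correct) Steps 1 and 2 apply unchanged. So the overall two-block strategy is right, but as written your proof both breaks at the decomposition and asserts a false identity in Step 2.
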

\begin{proof}
It is evident that $\mathcal{C}_h$ is a TGRS code. Similar to Theorem \ref{th1LCD}, set $$D=\operatorname{diag}\{v_1,v_2,\ldots,v_n\}\begin{pmatrix} G_h\\ H_h\end{pmatrix},$$
where $D$, $G_h$ and $H_h$ have the same form as in Theorem \ref{th1LCD}. To prove that $\mathcal{C}_h$ is a LCD, it suffices to show that  $\operatorname{rank}(D)=n$.

Note that $n=2k+l+m\,(0\leq m\leq l)$, i.e., $n-k-l=k+m$. By applying elementary row operations to the matrix $D$ in a manner similar to that used in the proof of Theorem \ref{th1LCD}, we obtain an equivalent
matrix
$$D_2=\begin{pmatrix}
\widetilde{A} & \boldsymbol{O}\\
\widetilde{B} & \widetilde{C}
\end{pmatrix},$$
where
$$\small
\widetilde{A}=\begin{pmatrix}
(v_1^2-1)\alpha_1 & \cdots & (v_{k-1}^2-1)\alpha_{k-1} \\
\vdots & \ddots & \vdots \\
(v_1^2-1)\alpha_1^{h-1} & \cdots & (v_{k-1}^2-1)\alpha_{k-1}^{h-1} \\
(v_1^2-1)\alpha_1^{h+1} & \cdots & (v_{k-1}^2-1)\alpha_{k-1}^{h+1} \\
\vdots & \ddots & \vdots  \\
(v_1^2-1)\alpha_1^{k-1} & \cdots & (v_{k-1}^2-1)\alpha_{k-1}^{k-1} \\
\end{pmatrix},~~~~
\widetilde{B}
=\begin{pmatrix}
\boldsymbol{b}\\
B'
\end{pmatrix}
~~\text{and}~~
\widetilde{C}
=\begin{pmatrix}
\boldsymbol{c}\\
C'
\end{pmatrix},
$$
in which
\begin{align*}
&\boldsymbol{b}=\big((v_1^2\!-\!1)\big(\alpha_1^{h}\!+\!\sum\limits_{t\in\mathcal L}\eta_t\alpha_1^{k+t}\big)\!+\!\alpha_1^{n-h}\sum\limits_{t=m}^{l}\eta_t\Psi_{l+m-t},\ldots,\\ &~~~~~~~~~(v_{k-1}^2\!-\!1)\big(\alpha_{k-1}^{h}\!+\!\sum\limits_{t\in\mathcal L}\eta_t\alpha_{k-1}^{k+t}\big)\!+\!\alpha_{k-1}^{n-h}\sum\limits_{t=m}^{l}\eta_t\Psi_{l+m-t}\big)\in \mathbb{F}_q^{k-1}
\end{align*}
and $$\boldsymbol{c}=\big(\alpha_k^{n-h}\sum\limits_{t=m}^{l}\eta_t\Psi_{l+m-t},\ldots ,\\ \alpha_n^{n-h}\sum\limits_{t=m}^{l}\eta_t\Psi_{l+m-t}\big)\in\mathbb{F}_q^{n-k+1},$$
and the submatrices $B'$ and $C'$ have the same form as those of $B$ and $C$ in Theorem~\ref{th1LCD}, respectively.

On the other hand, since the definition of $\Psi_s$ is independent of $k$, the proof of Theorem~\ref{th1LCD} gives $\Psi_s=\eta_s$ for $0\le s\le l$ in Eq.\eqref{eq9}, and hence
$$\sum\limits_{t=m}^{l}\eta_t\Psi_{l+m-t}=\sum\limits_{t=m}^{l}\eta_t\eta_{l+m-t}=0$$ by assumption. Then $$\boldsymbol{b}=\big((v_1^2\!-\!1)\big(\alpha_1^{h}\!+\!\sum\limits_{t\in\mathcal L}\eta_t\alpha_1^{k+t}\big),\ldots, (v_{k-1}^2\!-\!1)\big(\alpha_{k-1}^{h}\!+\!\sum\limits_{t\in\mathcal L}\eta_t\alpha_{k-1}^{k+t}\big)\big)$$ and $\boldsymbol{c}=\boldsymbol{0}$, which implies
\[
D_2=
\begin{pmatrix}
\widetilde{A} & \boldsymbol{O}\\
\boldsymbol{b} & \boldsymbol{0}\\
B & C
\end{pmatrix}=\begin{pmatrix}
\widehat{A} & \boldsymbol{O}\\
B & C
\end{pmatrix},
\]
where $\widehat{A}=\begin{pmatrix}\widetilde{A} \\
\boldsymbol{b}\end{pmatrix}$ and $\widehat{A}$ has the same form as that of the matrix $A$ in Theorem~\ref{th1LCD} after a simple row permutation.

By an argument similar to the proof of Theorem~\ref{th1LCD} and the details are omitted for brevity, we obtain $\operatorname{rank}(\widehat{A})=k-1$ and $\operatorname{rank}(C)=n-k+1$. Then $\operatorname{rank}(D)=\operatorname{rank}(D_2)=n$. Therefore, $\mathcal{C}_h$ is an LCD code by Lemma \ref{LCD}. This completes the proof.
\end{proof}

Similarly, by Lemma~\ref{ChMDS} and Theorem~\ref{th2LCD}, we obtain the following corollary.

\begin{corollary}\label{2LCDMDS}
Let the notion and symbols be defined
as those in Lemma~$\ref{ChMDS}$ and Theorem~$\ref{th2LCD}$. If $r_{h-1}\neq 0$, $\sum\limits_{t=m}^{l}\eta_t\eta_{l+m-t}=0$ and $\boldsymbol{\eta}\in\Phi$, then $\mathcal{C}_h$ is an LCD MDS code.
\end{corollary}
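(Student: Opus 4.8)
The statement is an immediate synthesis of two results already established in the paper, so the plan is to invoke them in the correct logical order rather than to prove anything new from scratch. The key observation is that the hypotheses of the corollary are precisely the union of the hypotheses of Theorem~\ref{th2LCD} and the MDS criterion of Lemma~\ref{ChMDS}, together with the standing assumptions of Theorem~\ref{th1LCD} (namely $n\mid(q-1)$, $\operatorname{ord}(\lambda)\mid\frac{q-1}{n}$, the specified forms of $\boldsymbol{\alpha}$ and $\boldsymbol{v}$, and the dimension constraint $k=\frac{n-l-m}{2}\ge 2$) that both cited results inherit.

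First I would note that the three conditions $r_{h-1}\neq 0$, $\sum_{t=m}^{l}\eta_t\eta_{l+m-t}=0$, and $\boldsymbol{\eta}\in\Phi$ are mutually compatible, since $\Phi\subseteq\mathbb{F}_q^{l+1}\setminus\{0\}$ only forbids the MDS-breaking vanishing condition and places no constraint incompatible with the quadratic relation on the $\eta_t$. Then the argument splits into two independent conclusions drawn from the same $\boldsymbol{\eta}$. Applying Theorem~\ref{th2LCD} with the hypotheses $r_{h-1}\neq 0$ and $\sum_{t=m}^{l}\eta_t\eta_{l+m-t}=0$ yields that $\mathcal{C}_h$ is an LCD code. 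Independently, applying Lemma~\ref{ChMDS} with the hypothesis $\boldsymbol{\eta}\in\Phi$ yields that $\mathcal{C}_h$ is an MDS code, since membership in $\Phi$ is exactly the necessary and sufficient condition there for every $k\times k$ submatrix of $G_h$ to have nonzero determinant.

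Combining these two facts, $\mathcal{C}_h$ is simultaneously LCD and MDS, hence an LCD MDS code, which is the assertion. I would present this in two or three short sentences, explicitly citing Theorem~\ref{th2LCD} for the LCD property and Lemma~\ref{ChMDS} for the MDS property, and remark that the two hypotheses act on disjoint structural features (the LCD property comes from the rank of $\binom{G_h}{H_h}$ being full, while the MDS property comes from the nonsingularity of every maximal square submatrix of $G_h$), so no interference between the conditions needs to be checked.

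There is essentially no obstacle here, since the real work was done in Theorem~\ref{th2LCD} and Lemma~\ref{ChMDS}; the only point requiring a moment's care is confirming that the dimension constraint $k=\frac{n-l-m}{2}$ imposed in Theorem~\ref{th2LCD} does not conflict with the bound $2\le k\le\frac{n-2l-1}{2}$ underlying Theorem~\ref{th1LCD}, from which $\Phi$ and the parity-check machinery are inherited. Since $0\le m\le l$ forces $k=\frac{n-l-m}{2}\ge\frac{n-2l}{2}$ and the case analysis in Theorem~\ref{th2LCD} already handles the relation $n=2k+l+m$, this compatibility is automatic, and the corollary follows by direct combination.
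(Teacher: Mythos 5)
Your proposal is correct and matches the paper exactly: the paper derives this corollary by combining Theorem~\ref{th2LCD} (which gives the LCD property from $r_{h-1}\neq 0$ and $\sum_{t=m}^{l}\eta_t\eta_{l+m-t}=0$) with Lemma~\ref{ChMDS} (which gives the MDS property from $\boldsymbol{\eta}\in\Phi$). Your additional compatibility remarks are harmless but unnecessary.
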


The following two examples obtained from MAGMA program are presented to illustrate the validity of  Theorem \ref{th2LCD} and Corollary \ref{2LCDMDS}.

\begin{example}
Let $(q,n,l,m,h,\lambda)=(31,15,3,0,1,1)$ and $\mathcal{L}=\{0,1,2,3\}$. Then $k=\frac{n-l-m}{2}=6$. Let $\boldsymbol\alpha=(\alpha_1,\ldots,\alpha_{15})
=(1,5,8,25,28,2,4,7,9,10,14,16,18,19,20)$, where  $\alpha_1,\ldots,\alpha_{15}$ be all roots of $x^{15}-1$. Choose $\boldsymbol v=(25,21,22,23,6,1,1,1,1,-1,1,\\-1,1,-1,1)$ and $(\eta_0,\eta_1,\eta_2,\eta_3)=(3,21,22,1)$. Then
\[
r_{h-1}=\mathrm{Coeff}_{h-1}\bigl(q(x)\bmod g(x)\bigr)=15\neq 0~~\text{and}~~\sum_{t=0}^{3}\eta_t\Psi_{3-t}=0,
\]

and the code $\mathcal{C}_h$ generated
\[
G_h=\left(
      \begin{array}{ccccccccccccccc}
25&21&22&23&6&1&1&1&1&-1&1&-1&1&-1&1\\
22&25&5&20&5&5&1&-2&26&15&0&19&8&17&-1\\
25&-2&13&22&23&4&16&18&19&24&10&23&14&11&28\\
25&21&11&23&24&8&2&2&16&23&16&27&4&23&2\\
25&12&26&17&21&16&8&14&20&13&7&-2&10&3&9\\
25&-2&22&22&-1&1&1&5&25&6&5&-1&25&26&25
      \end{array}
    \right)
\]
is a $[15,6]$ LCD code. This result is consistent with the result of Theorem \ref{th2LCD}.
\end{example}

\begin{example}
Let $(q,n,l,m,h,\lambda)=(31,10,2,2,1,1)$ and $\mathcal{L}=\{0,1,2\}$. Then $k=\frac{n-l-m}{2}=3$. Let $\boldsymbol{\alpha}=(\alpha_1,\ldots,\alpha_{10})=(30,2,29,27,1,8,16,4,23,15)$, where $\alpha_1,\ldots,\alpha_{10}$ be all roots of $x^{10}-1$. Choose $\boldsymbol{v}=(22,15,-1,1,1,1,1,-1,-1,-1)$ and $\boldsymbol{\eta}=(\eta_0,\eta_1,\eta_2)=(28,6,0)\neq \boldsymbol{0}$. Then $\sum_{t=m}^{l}\eta_t\Psi_{l+m-t}=\eta_2\Psi_{2}=0$,
\[
q(x)=x^{h-1}+\sum_{t\in\mathcal L}\eta_t x^{k+t-1}=1+28x^{2}+6x^{3}
~\mathrm{and}~
g(x)=\prod_{i=1}^{k-1}(x-\alpha_i)=(x-30)(x-2),
\]
which leads to $r_{h-1}=\mathrm{Coeff}_{h-1}\bigl(q(x)\bmod g(x)\bigr)=7\neq 0$, and hence the code $\mathcal{C}_h$ generated by
\[
G_h=\left(\begin{array}{cccccccccc}
22&15&-1& 1& 1& 1& 1&-1&-1&-1\\
21&25& 6&19& 4&15&16&16&29&23\\
22&29&27&16& 1& 2& 8&15&29&23
\end{array}\right)
\]
is a $[10,3,8]$ LCD MDS code. This result is consistent with the result of Corollary \ref{2LCDMDS}.
\end{example}

%

\section{Conclusion}

In this paper, we have determined the parity check matrix of $(\mathcal{L},\mathcal{P})$-TGRS code $\mathcal{C}_h$ with $\mathcal{L}=\{0,1,\ldots,l\}$ for $l\leq n-k-1$ and $\mathcal{P}=\{h\}$ for $1\leq h\leq k-1$, and established a necessary and sufficient condition for $\mathcal{C}_h$ to be AMDS. Moreover, we have constructed two classes of LCD codes from $\mathcal{C}_h$ by choosing suitable evaluation points. In particular, our constructions have included LCD MDS codes.

We should mention that although our $(\mathcal{L},\mathcal{P})$-TGRS code can be viewed as a special case of the codes presented by Hu et al. in [\ref{t37}], the LCD and LCD MDS properties have not been investigated therein. Therefore, our work has provided a more refined characterization from this perspective. Moreover, compared with the constructions of Liang et al. in [\ref{t41},\ref{t47}], our approach has allowed more flexible selections of hooks, and our proof method is different from that adopted in their work.

\subsection*{Funding}

This work was partially supported by National Natural Science Foundation of China (Grant No.62172337; No.62562055), Key Project of Gansu Natural Science Foundation (Grant No. 23JRRA685), the Funds for Innovative Fundamental Research Group Project of Gansu Province (Grant No. 23JRRA684).

\end{document}